\numberwithin{equation}{section}
\theoremstyle{plain}
\newtheorem{theo}{Theorem}
\newtheorem{lemme}{Lemma}
\theoremstyle{remark}
\newtheorem*{comment*}{Comment}
\newtheorem*{remarque*}{Remark}
\theoremstyle{definition}
\newtheorem{definition}{Definition}
\newtheorem*{assk*}{(K) Kinetic energy}
\newtheorem*{assi*}{(I) Short-range interaction}
\newtheorem*{assi0*}{(I0) Bounded interaction}
\newtheorem*{assp1*}{(P1) External (random) potential energy}
\newtheorem*{assp2*}{(P2) External (random) potential energy}
\newtheorem*{dsk*}{$\dsk$}
\newtheorem*{dskone*}{$\dskone$}
\newtheorem*{dskun*}{$\dskun$}
\newtheorem*{szero*}{$\szero$}
\newtheorem*{wone*}{$\wone$}
\newtheorem*{wtwo*}{$\wtwo$}
\providecommand{\eul}{\mathrm{e}}
\DeclareMathOperator{\dist}{dist}
\newcommand{\prob}[1]{\mathbb{P}\left\{#1\right\}}
\newcommand{\BA}{\mathbf{A}}
\newcommand{\BB}{\mathbf{B}}
\newcommand{\BC}{\mathbf{C}}
\newcommand{\BG}{\mathbf{G}}
\newcommand{\BH}{\mathbf{H}}
\newcommand{\BS}{\mathbf{S}}
\newcommand{\BU}{\mathbf{U}}
\newcommand{\BW}{\mathbf{W}}
\newcommand{\BDelta}{\mathbf{\Delta}}
\newcommand{\BPsi}{\mathbf{\Psi}}
\newcommand{\DR}{\mathbb{R}}
\newcommand{\DZ}{\mathbb{Z}}
\newcommand{\DP}{\mathbb{P}}
\newcommand{\Bu}{\mathbf{u}}
\newcommand{\Bv}{\mathbf{v}}
\newcommand{\Bx}{\mathbf{x}}
\newcommand{\By}{\mathbf{y}}
\newcommand{\rB}{\mathrm{B}}
\newcommand{\rR}{\mathrm{R}}
\newcommand{\rS}{\mathrm{S}}
\newcommand{\rT}{\mathrm{T}}
\newcommand{\eps}{\epsilon}
\newcommand{\dia}{\mathrm{D}}
\newcommand{\dir}{\mathrm{D}}
\newcommand{\neu}{\mathrm{N}}
\newcommand{\nd}{\mathrm{ND}}
\newcommand{\dsk}{\mathbf{(DS.}k\mathbf{)}}
\newcommand{\dskone}{\mathbf{(DS.}k+1\mathbf{)}}
\newcommand{\dskun}{\mathrm{(DS}.k)}
\newcommand{\dszero}{\mathbf{(DS.}0\mathbf{)}}
\newcommand{\ndrons}{\mathbf{(NDRoNS)}}
\newcommand{\szero}{\mathbf{(S.}0\mathbf{)}}
\newcommand{\wone}{\mathbf{(W1)}}
\newcommand{\wtwo}{\mathbf{(W2)}}
\begin{document}
\title[On two-particle localization at low energies]{On two-particle Anderson localization\\at low energies}
\author[T.~Ekanga]{Tr\'esor Ekanga$^{\ast}$}
\address{$^{\ast}$%
UMR 7586, Institut de Math\'ematiques de Jussieu (IMJ) et Centre National de la recherche Scientifique (CNRS), Immeuble Chevaleret
Universit\'e Paris Diderot Paris 7,
175 rue du Chevaleret, 
75013 Paris,
France}
\email{ekanga@math.jussieu.fr}
\subjclass[2010]{Primary}
\keywords{}
\date{\today}
\begin{abstract}
We prove exponential spectral localization in a two-particle lattice Anderson model, with a short-range interaction and external random i.i.d.\ potential, at sufficiently low energies. The proof is based on the multi-particle multi-scale analysis developed earlier by Chulaevsky and Suhov \cite{CS09} in the case of high disorder. Our method applies to a larger class of random potentials than in Aizenman and Warzel \cite{AW09} where dynamical localization was proved with the help of the fractional moment method.
\end{abstract}
\maketitle
\section{Introduction. Main result}

Consider the lattice $\DZ^d\times\DZ^d\cong\DZ^{2d}$, $d\geq 1$. We denote $\mathbb{D}=\left\{\Bx\in\DZ^{2d}:\Bx=(x,x)\right\}$ and $\left[[a,b]\right]:=[a,b]\cap\DZ$. Vectors $\Bx=(x_1,x_2)\in\DZ^d\times\DZ^d$ will be identified with configurations of two distinguishable quantum particles in $\DZ^d$. We denote by $\left|\,\cdot\,\right|$ the max-norm $\left\|\,\cdot\,\right\|_{\infty}$, namely for $x=(x^1,\dots,x^{2d})\in\DZ^{2d}$, 
\[
|x|=\max_{1\leq i\leq 2d}|x^i|
\]
and 
\[
|x|_1=\sum_{i=1}^{2d}|x^i|.
\]

\subsection*{The two-particle model}
We study a system of two interacting lattice quantum particles in a disordered environment, described by a random Hamiltonian $\BH_{V,\BU}(\omega)$, acting in the Hilbert space $\ell^2(\DZ^{2d})$, of the form
\begin{equation}\label{eq:def,H}
\BH_{V,\BU}(\omega)=-\Delta+\sum_{j=1,2}V(x_j,\omega)+\BU,
\end{equation}
where $\Bx=(x_1,x_2)\in\DZ^d\times\DZ^d$, $\Delta$ is the nearest-neighbor laplacian on $\DZ^{2d}$,
\begin{equation}\label{eq:def,Delta}
\Delta\Psi(\Bx)=\sum_{\substack{\By\in\DZ^{2d}\\
|\By|_1=1}}\left(\Psi(\Bx+\By)-\Psi(\Bx)\right),\quad 
\end{equation}
$V\colon\DZ^d\times\Omega\to\DR$ is a random field with i.i.d.\ (independent and identically distributed) values on $\DZ^d$, relative to some probability space $(\Omega,\mathcal{F},\mathbb{P})$, and $\BU$ is the multiplication operator by a function $\BU(\Bx)=\BU(x_1,x_2)$ which we assume bounded (but not necessarily symmetric).

Aizenmann and Warzel \cite{AW09} proved by the fractional moment method --- introduced in \cite{AM93} for single-particle systems --- the spectral and dynamical localization at low energies for such Hamiltonians under the assumption that the marginal probability distribution of the i.i.d.\ random field $V$  admits a bounded probability density $\rho_V$, satisfying some additional conditions. 

In this paper, using Multi-Scale Analysis (MSA) as in \cite{CS09}, we prove exponential localization at low energies under the much weaker assumption of log-H\"older continuity of the marginal distribution function $F_V$ of the field $V$. 

\subsubsection*{Assumption on $V$}
Specifically, we require that for some $\beta\in (0,1)$, some large enough $q_0>0$ and all sufficiently large $L>0$,
\begin{equation}\label{eq:cond.V}
\sup_{a\in \DR}\mathbb{P}\left\{V(0,\omega)\in\left[a,a+\eul^{-L^\beta}\right]\right\}\leq L^{-q_0}.
\end{equation}

\subsubsection*{Assumptions on $\BU$}
The interaction potential $\BU$ is assumed to be bounded, non-negative and to satisfy the following short-range condition:
\begin{align}\label{eq:cond.U}
&\text{There exists }0\leq r_0<+\infty\text{ such that}\notag\\
&\left|x_1-x_2\right|>r_0\implies\BU(x_1,x_2)=0.
\end{align}

\begin{remarque*}
The assumption of non-negativity of the interaction potential is not essential for our main result (Theorem \ref{thm:main.result}) on Anderson localization for two-particle systems. However, it allows to simplify the adaptation of the two-particle MSA scheme proposed in \cite{CS09} to the case of weak disorder at low energies. We plan to address a more general class of interacting $N$-particle Anderson models at low energies, with any $N\ge 2$, in a separate paper.
\end{remarque*}

We denote by $\sigma\left(\BH\left(\omega\right)\right)$ the spectrum of $\BH(\omega)$. It follows from our assumptions and from well-known results that the quantity 
\[
E^0:=\inf \sigma(\BH(\omega))
\]
is non-random, although it may be infinite, e.g., for gaussian random potentials.

Given an arbitrary finite lattice cube 
\[
\BC_L(\Bu):=\left\{\Bx\in \DZ^{2d}:\ |\Bx-\Bu|\leq L\right\}
\]
we will consider the finite-volume approximation $\BH_{\BC_L(\Bu)}$ of $\BH$ defined by
\begin{align}\label{eq:restriction,H}
\BH_{\BC_L(\Bu)}&:=\BH_{\BC_L(\Bu)}^{\dir}\\
&:=\BH\vert_{\ell^2(\BC_L(\Bu))}
\text{ with Dirichlet boundary conditions on } \partial^+ \BC_L(\Bu),\notag
\end{align}
where the boundary $\partial^+\BC_L(\Bu)$ is
\begin{equation} \label{eq:def,boundary}
\partial^+\BC_L(\Bu)=\left\{\Bv\in \DZ^{2d}\setminus\BC_L(\Bu)\mid\dist\bigl(\Bv,\BC_L(\Bu))=1\right\}.
\end{equation}
Our main result is

\begin{theo}[localization at low energies] \label{thm:main.result}
Let $\BH_{V,\BU}(\omega)$ be the random hamiltonian defined in \eqref{eq:def,H}. Suppose that $V$ is an i.i.d.\ random field satisfying \eqref{eq:cond.V}, and that the potential interaction $\BU$ is bounded, non-negative and satisfies \eqref{eq:cond.U}. Let $E^0=\inf\sigma(\BH)$. 

Then there exists $E^*>E^0$ such that 
\begin{enumerate}[\rm(i)]
\item
the spectrum of $\BH(\omega)$ in $[E^0,E^*]$ is pure point, 
\item
all its eigenfunctions $\Psi_n(\omega)$ with eigenvalues $E_n(\omega)\in[E^0,E^*]$ are exponentially decaying at infinity with a positive non-random rate of decay $m>0$:
\begin{equation}\label{eq:cond,eigenfunctions}
\left|\Psi_n(\Bx)\right|\leq C_n(\omega)\eul^{-m\left|\Bx\right|}.
\end{equation}
\end{enumerate}
\end{theo}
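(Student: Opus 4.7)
The plan is to adapt the two-particle multi-scale analysis (MSA) of Chulaevsky--Suhov \cite{CS09} from the high-disorder regime to the low-energy regime. The scheme rests on an inductive scaling statement $\dsk$ asserting that, for a geometric sequence of scales $L_k=\lceil L_0^{\alpha^k}\rceil$ with some $\alpha\in(1,2)$, the probability that a pair of sufficiently separated cubes $\BC_{L_k}(\Bu)$, $\BC_{L_k}(\Bv)$ are both $(E,m)$-singular for some $E\in[E^0,E^*]$ decays polynomially in $L_k$. Once $\dsk$ is established for all $k\geq 0$, spectral localization in $[E^0,E^*]$ with exponentially decaying eigenfunctions follows by the standard derivation via a generalized eigenfunction expansion coupled with a Borel--Cantelli argument on arbitrarily distant pairs of cubes.

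The initial-scale step $\szero$ is where the low-energy regime diverges decisively from the high-disorder setting. I would establish that, for suitably chosen $L_0$ and $E^*>E^0$, the probability that $\BH_{\BC_{L_0}(\Bu)}(\omega)$ has any eigenvalue in $[E^0,E^*]$ is smaller than $L_0^{-p}$ for some large $p=p(\alpha,d)$. On the complementary event, a Combes--Thomas estimate produces uniform exponential decay of the finite-volume resolvent on $[E^0,E^*]$ at a deterministic rate $m>0$. The probabilistic ingredient is a Lifshitz-tails-type bound: the spectral bottom is only reached when $V$ is atypically small on all of $\BC_{L_0}(\Bu)$, and \eqref{eq:cond.V} quantifies how rare this is; summing over the $O(L_0^{2d})$ lattice sites, one obtains a bound $L_0^{2d}\cdot L_0^{-q_0}$, which is as small as needed once $q_0$ is chosen large enough. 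Non-negativity of $\BU$ guarantees $\BH\geq-\Delta+V(x_1)+V(x_2)$, so we may compare directly to the non-interacting operator when bounding the ground state from below.

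The inductive step $\dsk\Rightarrow\dskone$ is the main obstacle and proceeds by the standard two-particle case analysis. A cube $\BC_{L_{k+1}}(\Bu)$ is called \emph{partially interactive} when $|u_1-u_2|>r_0+2L_{k+1}$, so that $\BU$ vanishes on the cube and the restricted Hamiltonian splits as a direct sum $H^{(1)}_{C_{L_{k+1}}(u_1)}\oplus H^{(1)}_{C_{L_{k+1}}(u_2)}$; in this case one transfers decay from the single-particle MSA at low energies by an eigenvalue-concentration argument on the component with $E$ close to $\sigma(H^{(1)})$. A cube is \emph{fully interactive} otherwise, and one then uses \emph{genuine} two-particle randomness: one combines the inductive hypothesis $\dsk$ inside subcubes with the geometric resolvent inequality and a two-volume Wegner estimate $\wtwo$ to pass from scale $L_k$ to scale $L_{k+1}$. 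Separation of the cubes in $\dskone$ is used to extract independence of the relevant random variables.

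The main technical difficulty will be proving the two-particle Wegner bound $\wtwo$ in the required form under the mere log-H\"older hypothesis \eqref{eq:cond.V}, rather than under absolute continuity of the marginal. I would follow Chulaevsky--Suhov and reduce to a pair of disjoint cubes whose single-particle projections have at most partial overlap; the non-negativity of $\BU$ combined with \eqref{eq:cond.U} then allows one to isolate independent blocks of the random field and apply a Stollmann-type continuity estimate quantified by the small intervals of length $\eul^{-L^\beta}$ appearing in \eqref{eq:cond.V}. A secondary check is that the Combes--Thomas decay rate $m$ obtained at the initial scale is large enough, relative to the scaling exponent $\alpha$ and the polynomial decay rate demanded in $\dsk$, for the induction to close; this is arranged by first fixing $E^*$ close to $E^0$ (so that $m$ is large) and only then choosing the MSA parameters.
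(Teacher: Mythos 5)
Your overall MSA plan agrees with the paper's: establish $\dsk$ by induction on scales, initialize via the Combes--Thomas estimate and Lifshitz tails, separate the inductive step into interactive and non-interactive cube pairs (the paper also treats a third, ``mixed'' case where one cube of the pair is interactive and the other is not; that case needs its own argument and should not be folded in silently), and close with the von Dreifus--Klein passage from $\dsk$ to exponential spectral localization.

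The genuine gap is in your treatment of the initial scale estimate. You propose to derive the Lifshitz-tails-type bound from hypothesis \eqref{eq:cond.V} together with a union bound over the $O(L_0^{2d})$ lattice sites, obtaining a bound of order $L_0^{2d}\cdot L_0^{-q_0}$. This mechanism does not work, for two reasons. First, \eqref{eq:cond.V} controls the probability that $V(0;\omega)$ falls in an interval of \emph{exponentially} small length $\eul^{-L^\beta}$; it gives no useful bound on $\prob{V(0;\omega)<\eps}$ for the \emph{polynomially} small thresholds $\eps\sim L_0^{-1/2}$ that govern the initial scale (where one wants $E^*-E^0\sim L_0^{-1/2}$). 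Second, the union-bound logic is reversed: a low ground state energy of $\BH_{\BC_{L_0}(\Bu)}$ forces $V$ to be small simultaneously at many sites of a subcube, and the probability of such an event decays like a \emph{product} over sites, not a sum, so the correct estimate is exponentially small, $\prob{E_0(\omega) \le 2CL_0^{-1/2}} \le \eul^{-c|C_{L_0}(u)|^{1/4}}$ (the one-particle Lifshitz tails bound, Lemma \ref{lem:CT}). The paper does not invoke \eqref{eq:cond.V} at this stage at all: it uses non-negativity and non-degeneracy of $V$ to get the one-particle estimate, then lifts it to the two-particle operator by combining the non-negativity of $\BU$ with the tensor-sum structure of the non-interacting Hamiltonian (Lemma \ref{lem:CTtwo}). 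Hypothesis \eqref{eq:cond.V} enters only later, in the Wegner estimates $\wone$ and $\wtwo$, where the exponentially small resonance window $\eul^{-L^\beta}$ is exactly what is required. You should replace your union-bound heuristic by the reduction to the single-particle Lifshitz tails estimate.
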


To prove Theorem \ref{thm:main.result}, we use an adaptation of the MSA to the two-particle interacting systems, following \cite{CS09}. Given a finite cube $\BC_L(\Bu)\subset\DZ^{2d}$, introduce the resolvent of the operator $\BH_{\BC_L(\Bu)}$,
\begin{equation} \label{eq:def.resolvent}
\BG_{\BC_L(\Bu)}(E):=\left(\BH_{\BC_L(\Bu)}-E\right)^{-1},\quad E\in\DR\setminus\sigma(\BH_{\BC_L(\Bu)}).
\end{equation}
Its matrix elements $\BG_{\BC_L(\Bu)}(\Bx,\By;E)$ in the canonical basis $\{\delta_{\Bx}\}$ in $\ell^2(\DZ^{2d})$ are usually called the Green functions of the operator $\BH_{\BC_L(\Bu)}$:
\begin{equation}
\BG_{\BC_L(\Bu)}(\Bx,\By;E)
=\left\langle\left(\BH_{\BC_L(\Bu)}-E\right)^{-1}\delta_{\Bx},\delta_{\By}\right\rangle,\ \Bx,\By\in \BC_L(\Bu).
\end{equation}

According to the general MSA approach, the exponential localization will be derived from Theorem \ref{thm:DS.k} below. To formulate it, we introduce the following notion.

\begin{definition}[$(E,m)$-singular]\label{def:NS}
Let $m>0$ and $E\in\DR$. A cube $\BC_L(\Bu)\subset\DZ^{2d}$ is called $(E,m)$-non-singular ($(E,m)$-NS) if
\begin{equation}\label{eq:def,GFs}
\max_{\Bv\in \partial^-\BC_L(\Bu)}\left|\BG_{\BC_L(\Bu)}(\Bu,\Bv;E)\right|\leq\eul^{-mL}.
\end{equation}
Otherwise, it is called $(E,m)$-singular ($(E,m)$-S).
\end{definition}

Let $\BS$ be the symmetry $\Bx=(x_1,x_2)\mapsto\BS\Bx=(x_2,x_1)$ in the lattice $\DZ^{2d}=\DZ^d\times\DZ^d$ (with $x_1, x_2\in\DZ^d$). The ``symmetrized distance'' is defined in $\DZ^{2d}$ by
\begin{equation}\label{eq;def,dS}
d_{\BS}(\Bx,\By)=\min\left\{\left|\Bx-\By\right|, \left|\BS(\Bx)-\By\right|\right\}.
\end{equation}

\begin{definition}[$\ell$-distant]\label{def:distant}
Two subsets $\BA,\,\BB\subset\DZ^{2d}$ are called $\ell$-distant if 
\[
d_{\BS}(\BA,\BB)>8\ell.
\]
\end{definition}

The multiscale analysis is based on a length scale $\{L_k\}_{k\geq 0}$ which is chosen as follows. 

\begin{definition}[length scale]\label{def:length.scale}
The length-scale $\{L_k\}_{k\geq 0}$ is a sequence of integers defined by the initial length-scale $L_0>2$, and by the recurrence relation $L_{k+1}=\lfloor L_k^{\alpha}\rfloor$, $k\geq 0$ where $1<\alpha<2$ is some fixed number. In this paper, $\alpha=3/2$.
\end{definition}

The length scale $\{L_k\}_{k\geq 0}$ is assumed to be chosen at the beginning of the multiscale analysis, except that in the course of the analysis it is often required that $L_0$ be large enough.

\begin{definition} \label{def:sequence.mk}
Given a positive number $m_0>0$, we define a positive sequence $m_k$ depending upon a positive number $\gamma>0$
\begin{equation}\label{eq:sequence.mk}
m_k=m_0\prod_{j=1}^k (1-\gamma L_j^{-1/2}), \quad k\geq 1.
\end{equation}
\end{definition}

It will be assumed that $L_0$ is large enough so that 

\[
\prod_{j=1}^{\infty}(1-\gamma L_j^{-1/2})\geq \frac{1}{2}
\]

We introduce the following property of pairs of two-particle cubes of size $L_k$:

\begin{dsk*}
For any pair of $L_k$-distant cubes $\BC_{L_k}(\Bu)$ and $\BC_{L_k}(\Bv)$:
\begin{equation}\label{eq:def,DS.k}
\mathbb{P}\left\{\exists\,E\in I:\BC_{L_k}(\Bu)\text{ and }\BC_{L_k}(\Bv) \text{ are } (E,m_k)\text{-S}\right\}\leq L_k^{-2p},
\end{equation}
where $p>12d$, and $I=[E^0,E^*]$ with $E^*>E^0$, are fixed.
\end{dsk*}

\begin{comment*}
This property depends on $m_k$, $p$, $L_0$, and $E^*$. Therefore, it would be better to use a more precise notation, like $\dsk_{m_k,p,L_0,E^*}$ or even simply $\dsk_{m_k}$. 
\end{comment*}


The analogous property for one-particle cubes in $\DZ^d$ is as follows:

\begin{dskun*}
For any pair of disjoint cubes $C_{L_k}(u)$ and $C_{L_k}(v)$:
\begin{equation}\label{eq:def,DS.k}
\mathbb{P}\left\{\exists\,E\in I:C_{L_k}(u)\text{ and }C_{L_k}(v) \text{ are } (E,m_k)\text{-S}\right\}\leq L_k^{-2\tilde p},
\end{equation}
where $\tilde p>2d$, and $I=[E^0,E^*]$ with $E^*>E^0$, are fixed.
\end{dskun*}

For a single-particle random Hamilonian of the form $H=-\Delta+V(x,\omega)$ in $\ell^2(\DZ^d)$ we have the well known result:

\begin{theo}[one-particle estimate]\label{thm:DS.kun}
Let $\tilde p>2d$ be fixed. Then, provided $L_0$ is large enough, there exists $E^*=E^*(\tilde p)>E^0$  such that $\dskun_{m_k,p,L_0,E^*}$ holds true for all $k\geq 0$.
\end{theo}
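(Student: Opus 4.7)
\emph{Plan.} I would prove the $\dskun$ properties by induction on $k$, using the classical single-particle multiscale analysis adapted to the log-H\"older regularity \eqref{eq:cond.V}. The four ingredients are a Lifshitz-tail-type initial estimate, the Combes--Thomas bound, the Geometric Resolvent Inequality (GRI), and a Wegner estimate extracted from \eqref{eq:cond.V}.

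\emph{Base case ($k=0$).} For $L_0$ large and $E^*$ sufficiently close to $E^0$, the aim is to show that, with probability at least $1 - L_0^{-2\tilde p}$, one has $\inf \sigma(H_{C_{L_0}(u)}) \geq E^* + \eta$ for some gap $\eta = \eta(L_0) > 0$. On this event, the Combes--Thomas estimate yields
\[
|G_{C_{L_0}(u)}(x,y;E)| \leq C\, \eul^{-c\sqrt{\eta}\, |x-y|}, \qquad E \in [E^0, E^*],
\]
so fixing $m_0 < c\sqrt{\eta}$ and $L_0$ large enough gives the $(E,m_0)$-NS property for both $C_{L_0}(u)$ and $C_{L_0}(v)$ uniformly in $E \in I$. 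The spectral-gap bound is of Lifshitz-tail type: via a min-max / Dirichlet-bracketing argument, $\lambda_1(H_{C_{L_0}(u)}) \leq E^0 + \eta$ forces $V$ to be anomalously small on a sub-region of $C_{L_0}(u)$, an event whose probability is controlled using \eqref{eq:cond.V} in the form $\mathbb{P}\{V(0,\omega) \leq v_{\min} + \delta\} \leq (\log 1/\delta)^{-q_0/\beta}$, where $v_{\min}$ is the essential infimum of $V$. Calibrating $\eta$ as a suitable function of $L_0$ drives this Lifshitz-tail probability below $L_0^{-2\tilde p}$.

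\emph{Inductive step ($L_k \to L_{k+1}$).} For disjoint cubes $C_{L_{k+1}}(u), C_{L_{k+1}}(v)$, the event ``both are $(E,m_{k+1})$-S for some $E \in I$'' is contained in the union of a \emph{resonance event} (where $\dist(E, \sigma(H_{C_{L_{k+1}}(w)})) \leq \eul^{-L_{k+1}^\beta}$ for $w = u$ or $v$) and a \emph{nested-singularity event} (where each $L_{k+1}$-cube contains a bad $L_k$-sub-cube). The Wegner estimate derived from \eqref{eq:cond.V} bounds the resonance probability by $\lesssim L_{k+1}^{2d-q_0}$, which is $\ll L_{k+1}^{-2\tilde p}$ for $q_0$ large. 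For the nested-singularity event, bad $L_k$-sub-cubes drawn from the two disjoint $L_{k+1}$-cubes are themselves disjoint, so $\dskun$ at scale $k$ applies directly; summing over $O(L_{k+1}^{2d})$ such pairs gives $O(L_{k+1}^{2d - 2\tilde p/\alpha})$, which is $\leq L_{k+1}^{-2\tilde p}$ provided $\tilde p$ is chosen sufficiently large in terms of $d$ and $\alpha = 3/2$. On the complement, iterating the GRI produces the $(E,m_{k+1})$-NS bound with $m_{k+1} = m_k(1 - \gamma L_{k+1}^{-1/2})$, matching \eqref{eq:sequence.mk}. A union bound over a polynomial-in-$L_{k+1}$ net of energies $E$ handles the uniformity in $E \in I$.

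\emph{Main obstacle.} The delicate point is the base case. Under a bounded-density assumption on $F_V$, the Lifshitz-tail estimate is classical and yields a gap $\eta$ uniform in $L_0$, hence a fixed mass $m_0$. Under the weaker log-H\"older condition \eqref{eq:cond.V}, $\eta = \eta(L_0)$ must be allowed to shrink (slowly) with $L_0$, and the resulting $m_0 = m_0(L_0)$ may be small; the MSA induction still applies since only positivity of $m_0$ is needed, but the quantitative balance between the Combes--Thomas rate, the Lifshitz-tail probability, and the budget $L_0^{-2\tilde p}$ requires careful accounting in terms of the exponents $\beta, q_0$ in \eqref{eq:cond.V}. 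The inductive step is then a direct specialization to one particle of the two-particle argument developed in \cite{CS09}.
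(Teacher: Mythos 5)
The paper itself does not prove this theorem; it simply defers to a citation (\cite{CS09}*{Result 9.8 \& Chapters 10--11}). Your sketch therefore goes well beyond what the paper offers, and the overall structure you propose (Lifshitz-tail base case via Combes--Thomas, Wegner plus GRI for the inductive step) is the standard and correct single-particle MSA template for the low-energy regime, consistent with how the paper handles the \emph{two}-particle initial scale (Lemma~\ref{lem:CT}, Lemma~\ref{lem:CT1}, Theorem~\ref{thm:initial.scale}). Two corrections, though. First, a minor one: the discrete Combes--Thomas bound (Lemma~\ref{lem:CT1}) gives a decay rate proportional to the gap $\delta$ itself, namely $|G(x,y;E)|\le\frac{2}{\delta}\eul^{-\frac{\delta}{12n}|x-y|}$, not to $\sqrt{\delta}$; the $\sqrt{\delta}$ rate is a continuum feature. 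With the linear rate and $\eta\sim L_0^{-1/2}$ one indeed gets $m_0\gtrsim L_0^{-1/2}$, which is exactly what the paper records.

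Second, and more substantively, you use \eqref{eq:cond.V} to control the Lifshitz-tail probability in the base case, extracting a bound of the form $(\log 1/\delta)^{-q_0/\beta}$ for a single site and then ``calibrating $\eta$'' to push the failure probability below $L_0^{-2\tilde p}$. This is the wrong source for the estimate, and as stated it only produces a polynomial-in-$L_0$ bound, which is both suboptimal and not obviously sufficient once you have to sum over scales. The actual Lifshitz-tail estimate underlying the paper (Lemma~\ref{lem:CT}, citing Kirsch's (11.16)) gives
\[
\prob{E_0(\omega)\le 2C L_0^{-1/2}}\le \eul^{-c|C_{L_0}(u)|^{1/4}},
\]
i.e.\ stretched-exponentially small in the volume. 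This comes from a large-deviation / concentration argument for the i.i.d.\ potential (Neumann bracketing plus Cram\'er-type bounds), and it needs only that $V$ be i.i.d., non-negative, and non-constant; it does not use the log-H\"older condition \eqref{eq:cond.V} at all. The modulus-of-continuity hypothesis \eqref{eq:cond.V} is consumed exclusively by the Wegner estimate in the inductive step (as in Lemma~\ref{lem:validity.W1.W2}), where your treatment is essentially right. Your ``main obstacle'' paragraph conflates the two mechanisms: under log-H\"older regularity the base case is no more delicate than under a bounded-density assumption, precisely because its probability bound is a concentration estimate, not a Wegner estimate.
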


\begin{proof}
See, e.g., \cite{CS09}*{Result 9.8 \& Chapters 10-11}.
\end{proof}

Here we will prove the same result for two-particle random Hamiltonian and we will be allowed to use Theorem \ref{thm:DS.kun} in which we assume that the exponent $\tilde{p}$ satisfies:
\[
\tilde{p}>\frac{9}{4}p+\frac{15}{2}d
\]

\begin{theo}[two-particle estimate]\label{thm:DS.k}
Let $p>12d$ be fixed. Then, provided $L_0$ is large enough, there exist $E^*=E^*(p)>E^0$ such that $\dsk_{m_k,p,L_0,E^*}$ holds true for all $k\geq 0$.
\end{theo}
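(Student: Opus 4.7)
The proof is an induction on $k$, following the two-particle MSA scheme of \cite{CS09} adapted to the low-energy regime.

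For the base step $\dszero$ I use that $\BU\geq 0$ and $-\Delta\geq 0$ give $\BH_{\BC_{L_0}(\Bu)}\geq H_{1,\BC_{L_0}(\Bu)}+H_{2,\BC_{L_0}(\Bu)}$, where $H_j=-\Delta_j+V(x_j,\omega)$, so one-particle Lifshitz-tail asymptotics yield a threshold $E^*>E^0$ for which $\sigma(\BH_{\BC_{L_0}(\Bu)})\subset[E^*+\delta,+\infty)$ with probability $\geq 1-L_0^{-2p}$. A Combes--Thomas bound then gives $|\BG_{\BC_{L_0}(\Bu)}(\Bu,\Bv;E)|\leq \eul^{-m_0 L_0}$ uniformly in $E\in[E^0,E^*]$, with $m_0$ as large as wished provided $E^*-E^0$ is small.

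For the inductive step $\dsk\Rightarrow\dskone$ fix an $L_{k+1}$-distant pair $\BC=\BC_{L_{k+1}}(\Bu)$, $\BC'=\BC_{L_{k+1}}(\Bv)$. Following \cite{CS09}, I classify each big cube as \emph{partially interactive} when $|u_1-u_2|>2L_{k+1}+r_0$ (so that $\BU\equiv 0$ on it and $\BH_\BC$ decouples as a tensor sum of two single-particle operators) and \emph{fully interactive} otherwise. When at least one member of the pair is partially interactive, the Green function factorizes; after using the symmetry $\BS$ and the $d_\BS$-distance to restore independence of the two resulting one-particle problems, Theorem \ref{thm:DS.kun} applied with the exponent $\tilde p>\tfrac{9}{4}p+\tfrac{15}{2}d$ supplies the required bound $\tfrac{1}{3}L_{k+1}^{-2p}$. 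The precise numerical value $\tfrac{9}{4}p+\tfrac{15}{2}d$ records the cost of two scale-rescalings $L_{k+1}=L_k^{3/2}$ together with the summation over cube positions in $\DZ^{2d}$ rather than $\DZ^d$.

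When both members are fully interactive, I call $\BC$ $E$-non-resonant (ENR) if $\dist(E,\sigma(\BH_\BC))\geq \eul^{-L_{k+1}^\beta}$. The geometric resolvent inequality, iterated along a chain of length $\sim L_{k+1}/L_k$ of ENR sub-cubes free of $L_k$-distant $(E,m_k)$-S pairs, shows that such a cube is $(E,m_{k+1})$-NS; the rate degradation is precisely the factor $1-\gamma L_{k+1}^{-1/2}$ in \eqref{eq:sequence.mk}. The complementary event splits into (a) at least one of $\BC,\BC'$ is $E$-resonant for some $E\in I$, and (b) each of $\BC,\BC'$ contains an $L_k$-distant pair of $(E,m_k)$-S sub-cubes. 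Because $d_\BS(\BC,\BC')>8L_{k+1}$, any two-particle sub-cube in $\BC$ and any in $\BC'$ are automatically $L_k$-distant and driven by independent samples of $V$, so (b) is bounded by applying $\dsk$ to such cross-pairs summed over position indices; the hypothesis $p>12d$ is what makes the resulting polynomial fit inside $\tfrac{1}{3}L_{k+1}^{-2p}$.

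The principal obstacle is (a), a two-particle Wegner-type estimate under the weak hypothesis \eqref{eq:cond.V}. Log-Hölder continuity of $F_V$ only grants a probability $L_{k+1}^{-q_0}$ that a potential sample lands in a window of width $\eul^{-L_{k+1}^\beta}$. A Stollmann-type bound for the eigenvalues of $\BH_\BC$, combined with a union bound over the $\lesssim L_{k+1}^{4d}$ eigenvalues and over an $\eul^{-L_{k+1}^\beta}$-net of $I$, will give an ER-probability $\leq L_{k+1}^{Cd}\cdot L_{k+1}^{-q_0}$, which can be pushed below $\tfrac{1}{3}L_{k+1}^{-2p}$ provided $q_0$ is chosen large enough in terms of $p$ and $d$. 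Balancing the three thresholds $p>12d$, $\tilde p>\tfrac{9}{4}p+\tfrac{15}{2}d$, and $q_0$, together with the requirement that $L_0$ be large enough to keep $m_k\geq m_0/2$, closes the induction.
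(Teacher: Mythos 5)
Your high-level skeleton (Lifshitz-tails + Combes--Thomas for the base, then a three-way case split for the inductive step) matches the paper, but the inductive step contains two substantive errors that the paper avoids.

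First, your resonance event ``(a): at least one of $\BC,\BC'$ is $E$-resonant for some $E\in I$'' is not a small-probability event once $k\geq 1$. For a \emph{fixed} interval $I=[E^0,E^*]$ and $L_{k+1}\to\infty$, the expected number of eigenvalues of $\BH_{\BC_{L_{k+1}}}$ in $I$ grows like $L_{k+1}^{2d}$ times the (nonzero) integrated density of states at $E^*$, so $\prob{\sigma(\BH_{\BC})\cap I\neq\varnothing}\to 1$; the proposed union bound over an $\eul^{-L_{k+1}^\beta}$-net of $I$ involves $\sim\eul^{L_{k+1}^\beta}$ points and cannot produce a polynomial estimate. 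What the MSA actually requires is a \emph{double}-resonance event: ``for some $E$, neither $\BC$ nor $\BC'$ is $E$-completely non-resonant.'' This is controlled by the two-particle Wegner-type bound $\wtwo$ (Lemma~\ref{lem:validity.W1.W2}, built on \cite{CS08}*{Theorem~2}): the spectra of the two cubes are driven by (nearly) independent randomness, so a near-collision between them within $\eul^{-L^\beta}$ is genuinely rare. Replacing your (a) with $\Sigma=\{\exists E:\text{neither is }E\text{-CNR}\}$ and invoking $\wtwo$ is essential; nothing softer works.

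Second, your bound for event (b) via ``cross-pairs'' is off by a full power of $L_k^{-2p}$. Applying $\dsk$ to a single cross-pair consisting of one sub-cube of $\BC$ and one of $\BC'$ yields only $L_k^{-2p}=L_{k+1}^{-2p/\alpha}=L_{k+1}^{-4p/3}$, which is \emph{larger} than the target $L_{k+1}^{-2p}$ even before summing over the $\sim L_{k+1}^{8d}$ choices of centres. The paper instead needs $2n$ pairwise $L_k$-distant, \emph{interactive} singular sub-cubes \emph{inside a single} $\BC_{L_{k+1}}$; with $n=2$ this gives $(L_k^{-2p})^2$ by Lemma~\ref{lem:property.D.cubes}, whose independence input is Lemma~\ref{lem:disjointness.projections.2-cube} (disjoint projections of $L_k$-distant interactive sub-cubes). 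Relatedly, the mixed case (one interactive, one non-interactive large cube) cannot be absorbed into ``at least one is partially interactive, hence factorize'': the factorization applies only to the NI cube. In Section~\ref{sec:mixed.cubes} the paper handles it by combining the non-tunnelling estimate (Lemma~\ref{lem:NDRoNS} plus Theorem~\ref{thm:DS.kun}) for the NI cube with the singular-sub-cube count (Lemmas~\ref{lem:property.ND.cubes}, \ref{lem:property.D.cubes}, which use $\dsk$) for the I cube; you need both ingredients there.
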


The proof is based on induction in $k$. Note that the initial length-scale estimate (for $L_0$ sufficiently large) uses the Combes--Thomas estimate and the Lifshitz tails phenomenon, essentially in the same way as for single-particle models \cites{St01,Kir08}. In fact, the single- or multi-particle structure of the potential energy is not crucial for such a bound. The inductive step is performed almost in the same way as in the case of high disorder (see \cite{CS09}). It uses Wegner-type estimates proved in \cite{CS08} (see \cite{Weg81} for the original Wegner estimate). Note, however, that unlike the high disorder regime, the value of the mass $m>0$ may be small, depending upon the amplitude of the random potential $V$. Namely, if the random external potential has the form $gV(x;\omega)$, then the value of the mass $m=m(g)\to 0$ as $\left|g\right|\to 0$.

The derivation of the spectral localization from the bounds of the multi-particle MSA can be obtained in the same way as in the case of high disorder. The following statement is a reformulation of \cite{CS09}*{Theorem 1.2}. In turn, the main idea of the proof goes back to \cite{DK89}. In a different form, a similar argument appears already in \cite{FMSS85}.

\begin{theo} \label{thm:DS.k.implies.loc}
Suppose that $\dsk$ holds true  for some $E^*>E^0$. Then, for $\DP$-almost all $\omega$ 
\begin{enumerate}[\rm(i)]
\item
the spectrum of $\BH(\omega)$ in $(-\infty, E^*]$ is pure point, 
\item
there exists a non-random number $m>0$ such that all eigenfunctions $\Psi_n(\omega)$ of $\BH(\omega)$ with eigenvalues $E_n(\omega)\leq E^*$ are exponentially decaying at infinity with rate $m$:
\begin{equation}\label{eq:def,eigenfunction2}
\left|\Psi_n(\Bx)\right|\leq C_n(\omega)\eul^{-m\left|\Bx\right|}.
\end{equation}
\end{enumerate}
\end{theo}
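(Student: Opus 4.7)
The plan is to follow the classical approach of \cite{DK89}, adapted to the two-particle setting as in \cite{CS09}: combine a Borel--Cantelli argument on the bounds $\dsk$ with the Schnol--Simon characterization of the spectrum by polynomially bounded generalized eigenfunctions, and conclude via iterated geometric resolvent identities (GRI).

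First, I would apply Borel--Cantelli. For each $k$, the number of pairs of $L_k$-distant cubes with centers in an enlarged region $\BC_{L_{k+1}^\tau}(0)$ is of order $L_{k+1}^{4d\tau}$, while by $\dsk$ each such pair fails the non-singularity dichotomy (both $(E,m_k)$-S for some $E \in I$) with probability at most $L_k^{-2p}$. Since $L_{k+1} \sim L_k^{3/2}$ and $p > 12d$, the exponent $\tau$ can be chosen so that the resulting probabilities form a summable series. Borel--Cantelli then yields a full-measure event $\Omega_0$ on which there exists $k_0(\omega)$ such that, for all $k \geq k_0$, no pair of $L_k$-distant cubes centered in $\BC_{L_{k+1}^\tau}(0)$ is simultaneously $(E,m_k)$-S for any $E \in I$.

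Next, by the Schnol--Simon theorem, the spectrum of $\BH(\omega)$ on $(-\infty, E^*]$ is carried by energies admitting a nonzero polynomially bounded solution $\Psi$ of $\BH\Psi = E\Psi$. Fix $\omega \in \Omega_0$ and such a pair $(E,\Psi)$. If $E < E^0$, a Combes--Thomas estimate on $(\BH-E)^{-1}$ forces $\Psi \equiv 0$; so assume $E \in I$ and pick $\Bu_0$ with $\Psi(\Bu_0) \neq 0$. The polynomial bound $|\Psi(\By)| \leq C(1+|\By|)^t$, plugged into the GRI applied at $\BC_{L_k}(\Bu_0)$, shows that if this cube were $(E,m_k)$-NS then
\[
|\Psi(\Bu_0)| \leq C'\, L_k^{2d+t}\, \eul^{-m_k L_k},
\]
which tends to $0$ as $k \to \infty$ --- a contradiction. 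Hence $\BC_{L_k}(\Bu_0)$ is $(E,m_k)$-S for all $k$ large, and combined with the Borel--Cantelli step, every $L_k$-distant cube $\BC_{L_k}(\Bv)$ with $|\Bv| \leq L_{k+1}^\tau$ must be $(E,m_k)$-NS.

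For $\Bx$ with $|\Bx|$ large, I then select $k$ with $L_{k+1} \leq |\Bx| \leq L_{k+1}^\tau$; once $\Bx$ is far from both $\Bu_0$ and $\BS\Bu_0$, one has $d_{\BS}(\Bu_0,\Bx) > 8L_k$, so $\BC_{L_k}(\Bx)$ is $(E,m_k)$-NS. Iterating the GRI along a chain of $(E,m_k)$-NS cubes extending outward from $\Bx$, each step gains a factor $\eul^{-m_k L_k}$ modulo polynomial boundary terms absorbed by the polynomial bound on $\Psi$; using $m_k \geq m_0/2$ from Definition \ref{def:sequence.mk}, this yields $|\Psi(\Bx)| \leq \eul^{-m|\Bx|}$ with $m \geq m_0/2$ non-random, establishing both assertions of the theorem. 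The main obstacle is precisely this iteration: one must verify that a chain of non-singular cubes can indeed be constructed extending $\Bx$ out to infinity while avoiding the excluded neighborhoods of $\Bu_0$ and of its diagonal image $\BS\Bu_0$ --- the $\BS\Bu_0$ obstruction being the genuine novelty of the two-particle setting compared with the one-particle Drinfeld--Klein argument. Since the excluded set is a neighborhood of two isolated points while $L_k \ll |\Bx|$, this can always be arranged for $|\Bx|$ large, and the polynomial surface factors from the repeated GRI are dominated by $\eul^{-m_k L_k}$ once $L_0$ is chosen large enough.
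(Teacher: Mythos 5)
Your proposal reconstructs the standard von Dreifus--Klein derivation (Borel--Cantelli on the $\dsk$ bounds, Schnol--Simon reduction to polynomially bounded generalized eigenfunctions, a ``center cube $\BC_{L_k}(\Bu_0)$ must be singular'' step, then iterated geometric resolvent inequalities along chains of non-singular cubes, with the two-particle twist that one must avoid neighborhoods of both $\Bu_0$ and $\BS\Bu_0$ owing to the symmetrized distance $d_{\BS}$). This is precisely the argument of \cite{CS09}*{Theorem 1.2}, which is all the paper itself invokes at this point, so the proposal is correct and takes the same route.
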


\begin{proof}
See \cite{CS09}*{Theorem 1.2}.
\end{proof}

Theorem \ref{thm:main.result} derives clearly from Theorem \ref{thm:DS.k.implies.loc} and Theorem \ref{thm:DS.k}. Therefore, it only remains to prove Theorem \ref{thm:DS.k}, i.e., to check property $\dsk$ for all $k\geq 0$.

The results of this paper were announced in \cite{E11}.

\section{The two-particle MSA scheme}\label{sec:MSA,Scheme}

We now outline the two-particle MSA which is used for the proof of Theorem \ref{thm:DS.k}.  

The following definition depends on a parameter $0<\beta<1$. For our purposes, we take $\beta=1/2$, but we keep $\beta$ in all formulae to show the dependence on this parameter.

\begin{definition}[$E$-resonant]\label{def:E-R}
Let $E\in\DR$ be given. A cube $\BC_L(\Bv)\subset\DZ^{2d}$ of size $L\geq 2$ is called $E$-resonant ($E$-R) if
\begin{equation} \label{eq:def,E-R}
\dist\left[E,\sigma\left(\BH_{\BC_L(\Bv)}\right)\right]<\eul^{-L^\beta}.
\end{equation}
Otherwise it is called $E$-non-resonant ($E$-NR).
\end{definition}

The next definition depends on the parameter $\alpha>1$ which governs the length scale of our multiscale analysis. For our purposes, we take $\alpha=3/2$, but we keep $\alpha$ in all formulae.

\begin{definition}[$E$-completely non-resonant]\label{def:E-CNR}
Let $E\in\DR$ be given. A cube $\BC_L(\Bv)\subset\DZ^{2d}$ of size $L\geq 2$ is called $E$-completely non-resonant ($E$-CNR) if it does not contain any $E$-R cube of size $\geq L^{1/\alpha}$. In particular, $\BC_L(\Bv)$ is itself $E$-NR.
\end{definition}

Given $L_0>2$, we introduce the following properties $\wone$ and $\wtwo$ of the random Hamiltonians $\BH_{\BC_{\ell}}$, $\ell\geq L_0$:

\begin{wone*}
For any cube $\BC_{\ell}(\Bx)$ of size $\ell\geq L_0$ and any $E\in\DR$,
\begin{equation}\label{eq:def,W1}
\mathbb{P}\left\{\BC_{\ell}(\Bx)\ \text{is}\ \text{not}\ E\text{-CNR}\right\}<\ell^{-q},
\end{equation}
where $q>4p$ and $L_0>2$ are given.
\end{wone*}

\begin{wtwo*}
For any $\ell$-distant cubes $\BC_{\ell}(\Bx)$ and $\BC_{\ell}(\By)$ of size $\ell\geq L_0$,
\begin{equation}\label{eq:def,W2}
\mathbb{P}\left\{\exists\ E\in\DR:\text{neither}\ \BC_{\ell}(\Bx)\ \text{nor}\ \BC_{\ell}(\By)\ \text{is}\ E\text{-CNR}\right\}<\ell^{-q}
\end{equation}
where $q>4p$ and $L_0>2$ are given.
\end{wtwo*}

\begin{comment*}
These properties depend on $q$ and $L_0$. Hence, better notations would be $\wone_{q,L_0}$ and $\wtwo_{q,L_0}$.
\end{comment*}

\begin{lemme}[Wegner-type estimates]\label{lem:validity.W1.W2}
Let $q_1,\,q_2>0$ and $L_0>0$ be given. Under assumptions \eqref{eq:cond.V} on the random potential $V(x,\omega)$ and assumption \eqref{eq:cond.U} on the interaction potential $\BU$, properties $\wone$ for $q=q_1$ and $\wtwo$ for $q=q_2$ hold true for any $\ell\geq L_0$ provided $L_0$ and $q_0$ are large enough.
\end{lemme}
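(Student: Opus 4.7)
The plan is to deduce both $\wone$ and $\wtwo$ from a single eigenvalue concentration bound of Stollmann type for the two-particle Hamiltonian $\BH_{\BC_\ell(\Bu)}$, fed by the log-H\"older input \eqref{eq:cond.V}. Viewing $\BH_{\BC_\ell(\Bu)}$ as a monotonic function of the finite collection of i.i.d.\ variables $\{V(z,\omega):z\in C_\ell(u_1)\cup C_\ell(u_2)\}$, Stollmann's lemma gives, for every interval $J\subset\DR$ of length $\eps$,
\[
\mathbb{P}\left\{\sigma(\BH_{\BC_\ell(\Bu)})\cap J\neq\emptyset\right\}\leq C\,\ell^{Kd}\,s(V,\eps),\qquad s(V,\eps):=\sup_{a\in\DR}\mathbb{P}\{V(0,\omega)\in[a,a+\eps]\},
\]
for an absolute constant $K$. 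With $\eps=\eul^{-\ell^{\beta}}$, hypothesis \eqref{eq:cond.V} yields $s(V,\eps)\leq\ell^{-q_0}$ for $\ell$ large enough.

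For $\wone$, I would observe that ``$\BC_\ell(\Bx)$ is not $E$-CNR'' means that some sub-cube $\BC_{\ell'}(\Bv)\subseteq\BC_\ell(\Bx)$ with $\ell'\geq\ell^{1/\alpha}$ is $E$-R. Taking a union bound over the $O(\ell^{2d+1})$ candidate sub-cubes produces
\[
\mathbb{P}\{\BC_\ell(\Bx)\text{ is not $E$-CNR}\}\leq C\,\ell^{(K+2)d+1}\cdot\ell^{-q_0/\alpha},
\]
which is smaller than $\ell^{-q_1}$ once $q_0$ is chosen large relative to $q_1$, $d$ and $\alpha$, giving $\wone$.

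For $\wtwo$, I would set up a ``double'' Wegner bound. The event that neither $\BC_\ell(\Bx)$ nor $\BC_\ell(\By)$ is $E$-CNR for some $E\in\DR$ unfolds, through a union bound over pairs of sub-cubes $\BC_{\ell_1}(\Bu)\subseteq\BC_\ell(\Bx)$ and $\BC_{\ell_2}(\Bv)\subseteq\BC_\ell(\By)$ of size $\geq\ell^{1/\alpha}$, into the event that some $\lambda_1\in\sigma(\BH_{\BC_{\ell_1}(\Bu)})$ lies within $2\,\eul^{-\ell^{\beta/\alpha}}$ of some $\lambda_2\in\sigma(\BH_{\BC_{\ell_2}(\Bv)})$. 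The decisive geometric fact is that if $d_\BS(\Bx,\By)>8\ell$ then for every such pair, either some single-particle projection $C_{\ell_1}(u_i)$ is disjoint from $C_{\ell_2}(v_1)\cup C_{\ell_2}(v_2)$, or some $C_{\ell_2}(v_j)$ is disjoint from $C_{\ell_1}(u_1)\cup C_{\ell_1}(u_2)$. Conditioning on the $\sigma$-algebra generated by the variables $V(z,\omega)$ outside this isolated single-particle ball freezes one of the two spectra while the remaining variables still act monotonically on the other; a conditional Stollmann estimate then bounds by $C\ell^{Kd}\cdot\ell^{-q_0/\alpha}$ the probability that the two spectra come within $2\,\eul^{-\ell^{\beta/\alpha}}$ of each other, and summing over pairs of sub-cubes and centers yields $\wtwo$ for $q_0$ large enough.

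The hard part I anticipate is the geometric step just used in $\wtwo$: I have to verify that $d_\BS(\Bx,\By)>8\ell$ indeed forces, at every scale $\ell'\leq\ell$, the disjointness of a single-particle projection of one cube from both projections of the other. The factor $8$ is calibrated precisely to absorb the $+\ell$ coming from passing to sub-cubes of size up to $\ell$; once this is verified by a short case analysis on which of the four quantities $|x_i-y_j|$ is $>8\ell$, the remaining probabilistic step is a standard combination of Stollmann's lemma, Fubini, and a union bound.
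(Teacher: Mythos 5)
Your proposal is correct and follows essentially the same route as the paper: a union bound over sub-cubes of size between $\ell^{1/\alpha}$ and $\ell$, combined with a one-volume Wegner bound for $\wone$ and a two-volume Wegner bound for $\wtwo$, both fed by the log-H\"older condition \eqref{eq:cond.V} at scale $\ell^{1/\alpha}$ (whence the exponent $q_0/\alpha$). The only difference is that the paper imports the two Wegner bounds as black boxes from Chulaevsky--Suhov \cite{CS08} (Theorems 1 and 2), whereas you re-derive them via Stollmann's lemma and, for the two-volume case, via conditioning on the variables outside an isolated single-particle projection together with the bipartite case analysis showing that $d_{\BS}(\Bx,\By)>8\ell$ forces such an isolated projection to exist --- which is precisely the content of the cited results.
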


\begin{proof}
(i) We first prove $\wone$ for a given $q=q_1$ provided $L_0$ and $q_0$ are large enough. Let $\ell\geq L_0$, $\BC_{\ell}(\Bx)$ and $E\in\DR$ be fixed. We have:
\begin{align*}
&\mathbb{P}\left\{\BC_{\ell}(\Bx)\text{ is not $E$-CNR}\right\}\\
&\qquad\qquad
=\mathbb{P}\left\{\exists\,\By\in \BC_{\ell}(\Bx),\ \exists\,\ell':\ \ell^{1/\alpha}\leq \ell'\leq \ell,\ \BC_{\ell'}(\By)\ \text{is $E$-R}\right\}\\
&\qquad\qquad
\leq\left|\BC_{\ell}(\Bx)\right|(\ell+1)\times\mathbb{P}\left\{\BC_{\ell'}(\By)\text{ is $E$-R}\right\}\\
&\qquad\qquad
\leq (2\ell+1)^{2d}(\ell+1)\times\mathbb{P}\left\{\dist\left[E,\sigma\left(\BH_{\BC_{\ell'}(\By)}\right)\right]<\eul^{-\ell'^{\beta}}\right\}
\intertext{then by the basic one-particle Wegner estimate \cite{CS08}*{Theorem 1}}
&\qquad\qquad
\leq (2\ell+1)^{2d+1}\times\left|\BC_{\ell'}(\By)\right|\times|C_{\ell'}(y_1)|\times\sup_{a\in\DR}\mathbb{P}\left\{V(0,\omega)\in\left[a,a+2\eul^{-\ell'^{\beta}}\right]\right\}
\intertext{which gives, using assumption \eqref{eq:cond.V}, and provided $L_0$ and $q_0$ are large enough} 
&\qquad\qquad
\leq (2\ell+1)^{5d+1}\ell^{-q_0/\alpha}<\ell^{-q_1}.
\end{align*}
(ii) 
Now we prove $\wtwo$ for a given $q=q_2$ provided $L_0$ and $q_0$ are large enough. Let $\ell\geq L_0$ and $\BC_{\ell}(\Bx)$, $\BC_{\ell}(\By)$ be fixed. We have:
\begin{align*}
&\mathbb{P}\left\{\exists\,E\in\DR:\text{neither }\BC_{\ell}(\Bx)\text{ nor }\BC_{\ell}(\By)\text{ is $E$-CNR}\right\}\\
&\qquad
=\mathbb{P}\{\exists\,E\in\DR,\ \exists\,\Bu\in\BC_{\ell}(\Bx),\ \exists\,\Bv\in\BC_{\ell}(\By),\ \exists\,\ell_1,\ell_2\text{ with }\ell^{1/\alpha}\leq\ell_1,\ell_2\leq\ell:\\ 
&\qquad\qquad\quad
\BC_{\ell_1}(\Bu)\text{ and }\BC_{\ell_2}(\Bv)\text{ are $E$-R}\}\\
&\qquad
\leq \left|\BC_{\ell}(\Bx)\right|\left|\BC_{\ell}(\By)\right|(\ell+1)^2\times\mathbb{P}\left\{\exists\,E\in\DR: \BC_{\ell_1}(\Bu),\ \BC_{\ell_2}(\Bv)\text{ are $E$-R}\right\}\\
&\qquad
\leq (2\ell+1)^{4d}(\ell+1)^2\times\mathbb{P}\{\exists\,E\in\DR:\dist[E,\sigma(\BH_{\BC_{\ell_j}(\Bu)})]<\eul^{-\ell_j^{\beta}},\ j=1,2\}\\
&\qquad
\leq (2\ell+1)^{4d}(\ell+1)^2\times\mathbb{P}\{\exists\,E\in\DR:\dist[E,\sigma(\BH_{\BC_{\ell_j}(\Bu)})]<\eul^{-(\ell_1\wedge\ell_2)^{\beta}},\ j=1,2\} \\
&\qquad
\leq(2\ell+1)^{4d}(\ell+1)^2\times\mathbb{P}\{\dist[\sigma(\BH_{\BC_{\ell_1}(\Bu)}),\sigma(\BH_{\BC_{\ell_2}(\Bv)})]<2\eul^{-(\ell_1\wedge\ell_2)^{\beta}}\}
\intertext{then by the basic two-particle Wegner-type estimate \cite{CS08}*{Theorem 2}}
&\qquad
\leq(2\ell+1)^{4d}(\ell+1)^2\times\left|\BC_{\ell_1}(\Bu)\right|\left|\times\BC_{\ell_2}(\Bv)\right|\times\max\left\{|C_{\ell_1}(u_1)|,|C_{\ell_2}(v_1)|\right\}\\
&\qquad\quad
\times\sup_{a\in\DR}\mathbb{P}\left\{V(0,\omega)\in \left[a,a+4\eul^{-(\ell_1\wedge\ell_2)^{\beta}}\right]\right\}
\intertext{which gives, using assumption \eqref{eq:cond.V}, and provided $L_0$ and $q_0$ are large enough}
&\qquad
\leq (2\ell+1)^{9d+2}\ell^{-q_0/\alpha}<\ell^{-q_2}.\qedhere
\end{align*}
\end{proof}

We consider now a property which serves as replacement of $\dszero$.

\begin{szero*}
For any cube $\BC_{L_0}(\Bx)\subset\DZ^{2d}$,
\begin{equation}\label{eq:def,S.0}
\mathbb{P}\left\{\exists\,E\in[E^0,E^*]:\BC_{L_0}(\Bx)\text{ is $(E,m_0)$-S}\right\}<L_0^{-2p},
\end{equation}
where $E^*>E^0$, $m_0>0$ and $L_0\geq 2$ are given.
\end{szero*}

Obviously, property $\szero$ implies property $\dszero$, so we focus on the former. Property $\szero$ is proven in \cite{CS09} in the case of high disorder. Our proof presented here is completely different. It uses the Combes--Thomas estimate and the well-known ``Lifshitz tails'' phenomenon.

We start with the Combes--Thomas estimate, formulated for fairly general discrete Schr\"odinger operators $H=-\Delta+W$, acting in a finite-dimensional Hilbert space $\ell^2(\Lambda)$, $\Lambda\subset\DZ^{n}$ finite, where $\Delta$ is the nearest-neighbor discrete Laplacian. It is deterministic, and the structure of the potential $W(x)$ is irrelevant. This allows to apply it to the two-particle Hamiltonian $\BH=-\BDelta+\BW$, acting in the space $\ell^2( \BC_L(\Bu))$, with $\BC_L(\Bu)\subset\DZ^{2d}$ and $\BW(\Bx)=V(x_1;\omega)+V(x_2;\omega)+\BU(\Bx)$.

\begin{lemme}[Combes--Thomas estimate]\label{lem:CT1}
Let $\BH\colon\ell^2(\Lambda)\to\ell^2(\Lambda)$, $\Lambda\subset\DZ^n$. Suppose that $E\in\DR$ satisfies $\dist(E,\sigma(\BH))=\delta\leq 1$. Then, for any $x,y\in\Lambda$,
\begin{equation}
|(\BH - E)^{-1}(x,y)| \le \frac{2}{\delta}\,\eul^{ - \frac{\delta}{12\, n}|x-y|_1}
 \le \frac{2}{\delta}\,\eul^{ - \frac{\delta}{12\, n}|x-y|}.
\end{equation}
\end{lemme}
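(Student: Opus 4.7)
The plan is the classical Combes--Thomas gauge (conjugation) argument. Fix $y\in\Lambda$ and introduce on $\ell^2(\Lambda)$ the diagonal multiplication operator $M_\mu$ with entries $M_\mu(x,x)=\eul^{\mu|x-y|_1}$, where $\mu>0$ will be optimized. Since $\Lambda$ is finite, $M_\mu$ is invertible, and the conjugated operator $\BH_\mu:=M_\mu \BH M_\mu^{-1}$ has the same spectrum as $\BH$. The potential $\BW$ is diagonal and thus commutes with $M_\mu$, so the difference $A_\mu:=\BH_\mu-\BH$ comes entirely from the conjugation of $-\Delta$; this is exactly why the statement is insensitive to the structure of $\BW$ and will apply verbatim to the two-particle Hamiltonian with $n=2d$.

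Next I would estimate $\|A_\mu\|$ by a Schur test. A direct computation gives $A_\mu(x,x')=(\eul^{\mu(|x-y|_1-|x'-y|_1)}-1)(-\Delta)(x,x')$, which vanishes on the diagonal and, off the diagonal, is supported on the nearest-neighbor pairs. For such pairs $||x-y|_1-|x'-y|_1|\leq 1$, so each nonzero entry has absolute value at most $\eul^\mu-1$. Using $\eul^\mu-1\leq 3\mu$ for $\mu\leq 1$ and the fact that each row and each column of $-\Delta$ has at most $2n$ off-diagonal entries, Schur's test gives $\|A_\mu\|\leq 6n\mu$. I then choose $\mu:=\delta/(12n)$; since $\delta\leq 1$ this forces $\mu\leq 1$ and $\|A_\mu\|\leq \delta/2$.

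With this choice, the perturbation identity $\BH_\mu-E=(\BH-E)+A_\mu$, combined with $\|(\BH-E)^{-1}\|\leq 1/\delta$, shows via a Neumann series that $\BH_\mu-E$ is invertible and $\|(\BH_\mu-E)^{-1}\|\leq (1/\delta)/(1-1/2)=2/\delta$. Unwinding the conjugation, $(\BH-E)^{-1}=M_\mu^{-1}(\BH_\mu-E)^{-1}M_\mu$; using $M_\mu(y)=1$ one gets the matrix element identity $(\BH-E)^{-1}(x,y)=\eul^{-\mu|x-y|_1}(\BH_\mu-E)^{-1}(x,y)$. Taking absolute values and bounding the second factor by the operator norm yields the first inequality of the lemma, and the second is immediate from $|x-y|_1\geq |x-y|$. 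The only real subtlety is the constant bookkeeping that ties the choice $\mu=\delta/(12n)$ to the row-sum bound for $A_\mu$; everything else is routine once the conjugation operator $M_\mu$ is in place.
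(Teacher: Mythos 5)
Your proof is correct and is precisely the standard Combes--Thomas conjugation argument to which the paper defers (it cites \cite{Kir08}*{Theorem 11.2} without reproducing the proof); the gauge $M_\mu$, the Schur bound on the commutator term via row/column sums, the Neumann series, and the normalization $M_\mu(y,y)=1$ are all exactly as in the cited source.
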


\begin{proof}
See \cite{Kir08}*{Theorem 11.2}.
\end{proof}

Next, we need the following statement, also applying to a general Schr\"odinger operator on finite subsets of a lattice of arbitrary dimension $n$. It summarizes well-known results (cf.~\cites{Kir08,St01} and references therein) from the spectral theory of a random one-particle Schr\"odinger operator $H(\omega)=-\Delta+V(x;\omega)$.

\begin{lemme}\label{lem:CT}
Let $H(\omega)=-\Delta+V(x;\omega)$ be a random Schr\"odinger operator, with non-negative i.i.d.\ random potential $V(\,\cdot\,;\omega)$, restricted to a cube $C_\ell(u)\subset\DZ^d$ with Dirichlet boundary conditions. Suppose the random variables $V(x;\omega)$ are non-constant and nonnegative.
Then for any  $C>0$ and arbitrary large $L_0>0$, there exists $c>0$ such that the lowest eigenvalue $E_0(\omega)$ of $H_{C_{L_0}(u)}(\omega)$ satisfies the bound
\begin{equation}
\prob{E_0(\omega) \le  2C L_0^{-1/2} } \leq \eul^{-c|C_{L_0}(u)|^{1/4}}.
\end{equation}
\end{lemme}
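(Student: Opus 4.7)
The plan is to prove this Lifshitz-tails type bound by combining Dirichlet--Neumann bracketing, Temple's inequality applied to a suitable truncation of $V$, and a Chernoff large-deviation estimate exploiting the non-triviality of $V$. The central point is that for the lowest eigenvalue to be below $2CL_0^{-1/2}$, the potential must be atypically small on essentially the whole cube $C_{L_0}(u)$, and the Lifshitz scaling $\ell_0 \sim L_0^{1/4}$ of the subcubes used in the bracketing step is what produces the stretched-exponential decay $\eul^{-c|C_{L_0}(u)|^{1/4}}$ in the conclusion.

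Concretely, I would fix a small constant $\epsilon_1>0$ (to be tuned below), set $\ell_0 = \lfloor\epsilon_1 L_0^{1/4}\rfloor$, partition $C_{L_0}(u)$ into $N \sim (L_0/\ell_0)^d = O(L_0^{3d/4})$ disjoint subcubes $C_{\ell_0}(u_j)$, and use the Dirichlet--Neumann bracketing
\[
H_{C_{L_0}(u)}^{\dir} \geq \bigoplus_j H^{\neu}_{C_{\ell_0}(u_j)}, \quad E_0(\omega) \geq \min_j E_0\bigl(H^{\neu}_{C_{\ell_0}(u_j)}(\omega)\bigr),
\]
reducing via a union bound over $j$ to estimating $\mathbb{P}\{E_0(H^{\neu}_{C_{\ell_0}(u_j)}) \leq 2CL_0^{-1/2}\}$. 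Denoting by $c_d>0$ the Neumann spectral-gap constant of the discrete Laplacian on a cube (so $\lambda_1(-\Delta^{\neu}_{C_{\ell_0}}) \geq c_d/\ell_0^2$), I would set $M := c_d/(4\ell_0^2)$, $V^M := V \wedge M$, and apply Temple's inequality to $-\Delta^{\neu} + V^M$ with the trial vector $\mathbf{1}/\sqrt{|C_{\ell_0}|}$; with this choice of $M$, and using $\overline{(V^M)^2}\leq M\,\overline{V^M}$, Temple yields the deterministic bound $E_0(H^{\neu}_{C_{\ell_0}(u_j)}) \geq \tfrac12 \overline{V^M_j}$, where $\overline{V^M_j}$ denotes the empirical mean of $V^M$ over $C_{\ell_0}(u_j)$. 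Non-constancy and non-negativity of $V$ supply $a_0,p_0>0$ with $\mathbb{P}\{V(0,\omega)\geq a_0\}\geq p_0$, and for $L_0$ large enough that $M \leq a_0$ one has $\mathbb{E}[V^M] \geq p_0 M$; Chernoff applied to the i.i.d.\ sum of $[0,M]$-valued variables $V^M(x,\omega)$ then gives $\mathbb{P}\{\overline{V^M_j} \leq p_0 M/2\} \leq \eul^{-(p_0/8)\ell_0^d}$. Fixing $\epsilon_1$ so small that $p_0 M/2 > 4CL_0^{-1/2}$ (equivalently $\epsilon_1^2 < p_0 c_d/(32C)$) and taking the union bound delivers
\[
\mathbb{P}\{E_0(\omega) \leq 2CL_0^{-1/2}\} \leq N\,\eul^{-(p_0/8)\ell_0^d} \lesssim L_0^{3d/4}\,\eul^{-cL_0^{d/4}} \leq \eul^{-c'|C_{L_0}(u)|^{1/4}}
\]
for $L_0$ large enough.

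The main obstacle is the coordination of the three scales $M$, $\ell_0$, $\epsilon_1$. Temple's inequality on $-\Delta^{\neu} + V^M$ is informative only when $M$ stays below a fixed fraction of the Neumann spectral gap $\sim\ell_0^{-2}$, forcing the scaling $M \asymp \ell_0^{-2}$; the Chernoff exponent $\sim\ell_0^d$ then dictates the choice $\ell_0 \sim L_0^{1/4}$ to produce the desired factor $\eul^{-c|C_{L_0}(u)|^{1/4}}$. It is crucial that $V$ be non-constant: this yields $a_0>0$ with $\mathbb{P}\{V \geq a_0\}>0$, so that for $L_0$ large (whence $M \leq a_0$) the truncation at level $M$ does not destroy $\mathbb{E}[V^M]$, which remains $\geq p_0 M$. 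Without such a positive lower tail, $\mathbb{E}[V^M]$ could vanish as $M\to 0$ and the Chernoff exponent would degenerate, breaking the argument.
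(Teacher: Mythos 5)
Your proposal is correct and reconstructs essentially the same argument that the paper invokes by citation: the standard Lifshitz-tails proof via Dirichlet--Neumann bracketing into subcubes of side $\sim L_0^{1/4}$, Temple's inequality applied to the truncated potential $V\wedge M$ with $M$ of order the Neumann spectral gap $\sim\ell_0^{-2}$, and a Chernoff large-deviation bound over the $O(L_0^{3d/4})$ subcubes. The paper itself gives no self-contained proof and simply refers to \cite{Kir08} (Estimate (6.10) and Lemma 6.4); your write-up supplies exactly the details behind those references.
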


\begin{remarque*}
This lemma is actually proven for the lowest eigenvalue $E_0^{\neu}(\omega)$ of the finite-volume hamiltonian $H^{\neu}(\omega)=-\Delta+V(x;\omega)$ with Neumann boundary conditions. By Dirichlet--Neumann bracketing, the same bound holds true for the lowest eigenvalue $E_0(\omega)=E_0^{\dir}(\omega)$ of the finite-volume hamiltonian $H_{C_{L_0}(u)}(\omega)=H_{C_{L_0}(u)}^{\dir}(\omega)$ with Dirichlet boundary conditions.
\end{remarque*}

\begin{proof}
See \cite{Kir08}*{Estimate (11.16)} which follows from the study of Lifschitz tails, precisely from \cite{Kir08}*{Estimate (6.10)} and \cite{Kir08}*{Lemma 6.4}.
\end{proof}

\begin{lemme}\label{lem:CTtwo}
Consider a random Schr\"odinger operator $\BH(\omega) = -\BDelta + \BW(\Bx;\omega)$ in a lattice cube $C_{\ell}(u)\subset\DZ^n$ with Dirichlet boundary conditions. Suppose that:
\begin{enumerate}[\rm(a)]
\item
The random external potential $V(x;\omega)$ and the interaction potential $\BU(x_1,x_2)$ are non-negative.
\item
For any $\eps>0$, $\prob{ V(x;\omega) < \eps} >0$, i.e., $0$ is the sharp lower bound for the values of the random potential $V$.
\item
The random variables $V(x;\omega)$ are non-constant: $\prob{V(x;\omega)>0}>0$.
\end{enumerate}
Then for any  $C>0$ and arbitrary large  $L_0>0$ there exist $c>0$ such that the lowest eigenvalue $E_0(\omega)$ of $\BH_{C_{L_0}(u)}(\omega)$ satisfies the bound
\begin{equation}
\prob{E_0(\omega) \le  2C L_0^{-1/2} } \leq \eul^{-c|C_{L_0}(u)|^{1/4}}.
\end{equation}
\end{lemme}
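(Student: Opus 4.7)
The plan is to reduce the two-particle estimate to the one-particle Lemma \ref{lem:CT} by exploiting the non-negativity of both the interaction $\BU$ and the random potential $V$. Writing $\Bu = (u_1, u_2)\in\DZ^d\times\DZ^d$, the key structural observation is that, because the max-norm on $\DZ^{2d}$ is defined coordinate-by-coordinate, the two-particle cube factorizes as a Cartesian product of one-particle cubes,
\[
\BC_{L_0}(\Bu) \;=\; C_{L_0}(u_1) \times C_{L_0}(u_2).
\]
Since the jumps contributing to the discrete Laplacian $\BDelta$ change only one coordinate at a time (cf.~\eqref{eq:def,Delta}), on this product cube the Dirichlet Laplacian splits as a tensor sum $\Delta_1 \otimes I + I \otimes \Delta_2$, where $\Delta_i$ denotes the one-particle Dirichlet Laplacian on $C_{L_0}(u_i)$.

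Given this product structure, the next step is to compare $\BH_{\BC_{L_0}(\Bu)}$ with the corresponding non-interacting two-particle operator. With $\BW(\Bx;\omega) = V(x_1;\omega) + V(x_2;\omega) + \BU(\Bx)$ and $\BU \geq 0$ by (a), one has the operator inequality
\[
\BH_{\BC_{L_0}(\Bu)}(\omega) \;\geq\; H_1(\omega)\otimes I + I \otimes H_2(\omega),
\]
where $H_i(\omega) = -\Delta_i + V(\,\cdot\,;\omega)$ acts on $\ell^2(C_{L_0}(u_i))$ with Dirichlet boundary conditions. The bottom of the spectrum of a tensor sum is the sum of the one-particle ground-state energies, so $E_0(\BH_{\BC_{L_0}(\Bu)}) \geq E_0(H_1) + E_0(H_2)$. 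Assumption (a) guarantees $E_0(H_i)\geq 0$ for $i=1,2$, and consequently the event $\{E_0(\BH_{\BC_{L_0}(\Bu)})\leq 2CL_0^{-1/2}\}$ is contained in $\{E_0(H_1)\leq 2CL_0^{-1/2}\}$.

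At this point I would invoke Lemma \ref{lem:CT} applied to the one-particle operator $H_1(\omega)$, whose hypotheses are precisely (b) and (c). This yields
\[
\prob{E_0(\BH_{\BC_{L_0}(\Bu)}(\omega)) \leq 2CL_0^{-1/2}} \;\leq\; \eul^{-c|C_{L_0}(u_1)|^{1/4}},
\]
a stretched-exponential bound in $L_0$ which, after renaming the constant $c > 0$, is of the form announced. The only mildly delicate point is the identification of $-\BDelta$ restricted to $\BC_{L_0}(\Bu)$ with Dirichlet conditions as the tensor sum of the one-particle Dirichlet Laplacians on the factors; once combined with the product form of $\BC_{L_0}(\Bu)$ and with the fact that $\partial^+\BC_{L_0}(\Bu)$ is reached in one step from the interior by modifying a single coordinate, this is routine, and the remainder of the proof is purely spectral-theoretic.
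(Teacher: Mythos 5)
Your proof is correct and follows essentially the same route as the paper: drop the non-negative interaction $\BU$, use the tensor-sum decomposition of the non-interacting two-particle operator on the product cube to write $E_0 \geq E_0^{(1)}+E_0^{(2)}$ with both terms non-negative, and then invoke the single-particle Lemma~\ref{lem:CT}. One minor slip: the hypotheses of Lemma~\ref{lem:CT} that you use are (a) (non-negativity of $V$) and (c) (non-constancy), not (b) and (c) as you wrote, although (b) is of course what makes the conclusion non-vacuous.
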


\begin{proof}
The interaction potential $\BU$ is non-negative, so that by min-max principle, the lowest eigenvalue $E_0(\omega)$  of $\BH_{C_{L_0}(u)}(\omega)$ is bounded from below by the lowest eigenvalue $E_0^{\neu}(\omega)$  of operator  $-\BDelta + V(x_1;\omega) + V(x_2;\omega)$. This latter operator can be written as follows:
\[
-\BDelta + V(x_1;\omega) + V(x_2;\omega) = H^{(1)} \otimes I^{(2)}
+ I^{(1)} \otimes H^{(2)},
\]
where $H^{(j)} = -\Delta + V(x_j;\omega)$, $j=1,2$. As a result, $E_0^{\neu}(\omega)$ must have the form $E_0^{\neu}(\omega) = E^{(1)}_0(\omega) + E^{(2)}_0(\omega)$, where $E^{(j)}_0(\omega)$ is the lowest eigenvalue of $ H^{(j)}$. Finally, $E_0^{(1)}(\omega)$ and $E_0^{(2)}(\omega)$ are non-negative due to the non-negativity of the external potential, so that for any $s \ge 0$
\[
\prob{E_0^{\neu}(\omega) \le s }\le \prob{E_0^{(1)}(\omega)\le s}.
\]
Now the assertion follows from Lemma \ref{lem:CT} applied to the single-particle Schr\"odinger operator $H^{(1)}$.
\end{proof}

Lemma \ref{lem:CTtwo} leads directly to the initial scale estimate for our two-particle model.

\begin{theo}[initial scale estimate]\label{thm:initial.scale}
For any $p>0$, $C>0$ and $L_0$ large enough, there exists $E^{*}>E^0$  such that properties $\szero$ and $\dszero$ hold true for some  $m_0\ge CL_0^{-1/2}>0$.
\end{theo}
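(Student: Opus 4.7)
The plan is to establish $\szero$ directly; property $\dszero$ will then be an immediate consequence, since the event that both cubes of an $L_0$-distant pair are $(E,m_0)$-S is contained in the event that one specific cube is $(E,m_0)$-S, whence the bound $L_0^{-2p}$ for $\szero$ already implies the bound demanded by $\dszero$.

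First, given $C>0$ and $p>0$, I would apply Lemma \ref{lem:CTtwo} with its constant replaced by $\widetilde C := 25d\,C$ (any value strictly greater than $24dC$, the Combes--Thomas denominator for ambient dimension $n=2d$, would suffice). This furnishes, for every two-particle cube $\BC_{L_0}(\Bu)\subset\DZ^{2d}$, an event $\Omega_{L_0}(\Bu) := \{E_0(\omega) > 2\widetilde C L_0^{-1/2}\}$ of probability at least $1-\eul^{-c|\BC_{L_0}(\Bu)|^{1/4}}$. Since $V,\BU\geq 0$ one has $E^0\geq 0$, and standard test-function arguments using \eqref{eq:cond.V} in fact give $E^0 = 0$. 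Setting $E^* := \widetilde C L_0^{-1/2}$, every $E\in[E^0,E^*]$ satisfies
\[
\dist\bigl(E,\sigma(\BH_{\BC_{L_0}(\Bu)})\bigr)\geq \widetilde C L_0^{-1/2}
\]
on the event $\Omega_{L_0}(\Bu)$.

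Second, I would invoke the Combes--Thomas estimate (Lemma \ref{lem:CT1}) with $n=2d$ and $\delta := \widetilde C L_0^{-1/2} \leq 1$. For $\Bv\in\partial^-\BC_{L_0}(\Bu)$ one has $|\Bu-\Bv|_1\geq L_0$, hence
\[
\bigl|\BG_{\BC_{L_0}(\Bu)}(\Bu,\Bv;E)\bigr| \leq \frac{2}{\delta}\,\eul^{-\frac{\delta L_0}{24d}} = \frac{2L_0^{1/2}}{\widetilde C}\,\eul^{-\frac{25C}{24}L_0^{1/2}}.
\]
For $L_0$ sufficiently large the polynomial prefactor $2L_0^{1/2}/\widetilde C$ is absorbed and the right-hand side is bounded by $\eul^{-CL_0^{1/2}} = \eul^{-m_0 L_0}$ with $m_0 := CL_0^{-1/2}$. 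Thus, on $\Omega_{L_0}(\Bu)$, the cube $\BC_{L_0}(\Bu)$ is $(E,m_0)$-NS for every $E\in[E^0,E^*]$, giving
\[
\mathbb{P}\{\exists E\in[E^0,E^*]:\BC_{L_0}(\Bu)\text{ is }(E,m_0)\text{-S}\} \leq \eul^{-c|\BC_{L_0}(\Bu)|^{1/4}} = \eul^{-c(2L_0+1)^{d/2}},
\]
which is smaller than $L_0^{-2p}$ for $L_0$ large. This proves $\szero$.

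I expect the main obstacle to be purely technical bookkeeping: choosing $\widetilde C$ just large enough relative to $24dC$ to absorb the Combes--Thomas prefactor $2/\delta$ into a slightly weaker exponential rate while preserving the required lower bound $m_0\geq CL_0^{-1/2}$. All the probabilistic content is packaged in Lemma \ref{lem:CTtwo} (itself a Lifshitz-tails statement), so no further probabilistic estimates are needed at this stage.
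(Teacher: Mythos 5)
Your proposal is correct and matches the paper's intent exactly: the paper's proof is the one-liner ``Use Lemma \ref{lem:CTtwo} and Lemma \ref{lem:CT1},'' and you have spelled out precisely that combination --- the two-particle Lifshitz-tails bound of Lemma \ref{lem:CTtwo} gives, with probability $\ge 1-\eul^{-c(2L_0+1)^{d/2}}$, a spectral gap of order $L_0^{-1/2}$ above $E^0$, and the Combes--Thomas estimate of Lemma \ref{lem:CT1} converts that gap into the exponential Green-function decay required for $\szero$ at rate $m_0 = CL_0^{-1/2}$, the failure probability being super-polynomially smaller than $L_0^{-2p}$. Your choices $E^{*}=\widetilde C L_0^{-1/2}$ and $\widetilde C = 25dC$ are a sensible way to absorb the Combes--Thomas prefactor $2/\delta$, and the remark that $\szero$ implies $\dszero$ is stated verbatim in the paper.
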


\begin{proof}
Use Lemma \ref{lem:CTtwo} and Lemma \ref{lem:CT1}.
\end{proof}

To complete the inductive step of the two-particle MSA, it only remains to prove

\begin{theo}\label{thm:inductive,MSA}
There exists $0<L^*<\infty$ such that, for any $L_0\geq L^*$ and any $k\geq 0$,
\[
\dsk_{m_k}\implies\dskone_{m_{k+1}}
\]
\end{theo}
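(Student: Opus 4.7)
The plan is to prove the inductive step $\dsk_{m_k}\implies\dskone_{m_{k+1}}$ by following the two-particle MSA machinery of \cite{CS09}, adapted to handle both non-interactive pairs of cubes (via the one-particle Theorem \ref{thm:DS.kun}) and interactive pairs (via the inductive hypothesis). First I would classify a cube $\BC_L(\Bw)$, $\Bw=(w_1,w_2)$, as \emph{interactive} (I) if $|w_1-w_2|\leq 3L+r_0$, and \emph{non-interactive} (NI) otherwise. For a non-interactive cube at scale $L_{k+1}$ the short-range condition \eqref{eq:cond.U} forces $\BU\equiv 0$ on the cube, so the restricted Hamiltonian $\BH_{\BC_{L_{k+1}}(\Bw)}$ is unitarily a tensor sum $H^{(1)}\otimes I+I\otimes H^{(2)}$ of single-particle operators; its Green functions then factor through the one-particle resolvents and can be controlled by $\dskun$.

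Next, for a pair of $L_{k+1}$-distant cubes $\BC_{L_{k+1}}(\Bu)$, $\BC_{L_{k+1}}(\Bv)$, I would split the analysis into three cases. In the I--I case (both interactive), the condition $d_{\BS}(\Bu,\Bv)>8L_{k+1}$ guarantees that the projections of the two cubes onto the $\DZ^d$ factors are sufficiently separated that the two restricted Hamiltonians depend on disjoint families of i.i.d.\ random variables and are therefore independent. In the NI--NI and NI--I cases, the tensor-sum structure converts singularity of a non-interactive two-particle cube at scale $L_{k+1}$ into the existence of a singular one-particle subcube at roughly the same scale, and the one-particle bound $\dskun$ with $\tilde p>\frac{9}{4}p+\frac{15}{2}d$ supplies the required probabilistic control.

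The core step in each case is the standard MSA decomposition: the event that both cubes are $(E,m_{k+1})$-S for some $E\in I$ is contained in the union of, on the one hand, the event that at least one cube fails to be $E$-CNR (controlled by $\wone$ and $\wtwo$ from Lemma \ref{lem:validity.W1.W2}); and, on the other hand, the event that, inside an $E$-CNR cube at scale $L_{k+1}$, there exist two disjoint $(E,m_k)$-singular subcubes of radius $L_k$. The $E$-CNR property together with the absence of two disjoint $(E,m_k)$-singular $L_k$-subcubes gives, by iteration of the geometric resolvent inequality as in \cite{CS09}, the $(E,m_{k+1})$-NS bound with $m_{k+1}=m_k(1-\gamma L_{k+1}^{-1/2})$, consistently with \eqref{eq:sequence.mk}. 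In the I--I case the probability of two disjoint singular $L_k$-subcubes is bounded by independence and the inductive hypothesis $\dsk_{m_k}$, yielding roughly $L_{k+1}^{4d}\cdot L_k^{-4p}$; the choice $\alpha=3/2$ makes this $\ll L_{k+1}^{-2p}$.

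The main obstacle I expect is the non-interactive cases, where one has to convert a two-particle singular event into a one-particle singular event at the large scale $L_{k+1}$ and then invoke $\dskun$. Several combinatorial prefactors intervene: the number of pairs of one-particle subcubes spawned by the projections, and the number of possible reference sites when locating $L_k$-singular subcubes. The inequality $\tilde p>\frac{9}{4}p+\frac{15}{2}d$ is precisely tuned so that, after these prefactors and the scaling $L_{k+1}=\lfloor L_k^{3/2}\rfloor$, the one-particle probability bound dominates $L_{k+1}^{-2p}$ by a safe margin. Getting this bookkeeping right, and verifying rigorously the independence of the two restricted Hamiltonians in the I--I case under the $\BS$-symmetrized $8L_{k+1}$-separation, are the two delicate points of the argument.
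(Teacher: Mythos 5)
Your high-level plan matches the paper's: classify cubes as interactive or non-interactive, split a pair of $L_{k+1}$-distant cubes into the three cases NI--NI, I--I, NI--I, control the ``not $E$-CNR'' events with $\wone$, $\wtwo$, use Lemma~\ref{lem:disjointness.projections.2-cube} to get stochastic independence for interactive pairs, and use the one-particle bound $\dskun$ for the non-interactive ingredient. Two small inaccuracies: the threshold defining an interactive cube is effectively $|w_1-w_2|\le 2L+r_0$ (equivalently $\BC_L(\Bw)\cap\mathbb{D}_{r_0}\neq\varnothing$), not $3L+r_0$; and the paper establishes the NI--NI case \emph{directly} from Theorem~\ref{thm:DS.kun} for every $k$ (Theorem~\ref{thm:DSk.ND-cubes}), without invoking the inductive hypothesis.

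There is, however, a genuine gap in your bookkeeping. You take as the ``bad'' sub-event the existence of two $L_k$-distant $(E,m_k)$-singular $L_k$-subcubes inside $\BC_{L_{k+1}}(\Bx)$, and assert that its probability is $\sim L_{k+1}^{4d}L_k^{-4p}$. But a single pair of singular subcubes falls under $\dsk$ and only gives $L_k^{-2p}$, not $L_k^{-4p}$; with the $L_{k+1}^{4d}$ prefactor for the number of pairs, the resulting bound is
\[
L_{k+1}^{4d}\,L_k^{-2p}=L_{k+1}^{\,4d-\frac{4p}{3}},
\]
and requiring this to be $\le L_{k+1}^{-2p}$ forces $4d\le -\frac{2p}{3}$, which is impossible. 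In other words, the von Dreifus--Klein threshold ``at most one singular subcube'' does not close the induction with $\alpha=3/2$. The paper instead allows up to $J=2n+1$ singular subcubes (choosing $n=2$, so $J=5$) inside an $E$-CNR cube, via Lemma~\ref{lem:(E,J)-CNR,implies,NS}, and bounds the bad event $M_{\nd}+M_{\dia}\ge J+1$ by splitting into $\{M_{\nd}\ge 2\}$ (controlled through Lemma~\ref{lem:property.ND.cubes}, which uses the NI--NI result, not independence) and $\{M_{\dia}\ge 2n=4\}$ (controlled through Lemma~\ref{lem:property.D.cubes}, where \emph{two} independent pairs of interactive cubes give $L_k^{-4p}$). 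It is precisely the $n=2$ choice that produces the exponent $L_{k+1}^{8d-\frac{8p}{3}}\le L_{k+1}^{-2p}$ under $p>12d$. Moreover, your description glosses over the fact that the singular $L_k$-subcubes inside an interactive $\BC_{L_{k+1}}$ cube may themselves be NI, and that two NI subcubes need not have disjoint projections onto $\DZ^d$, so independence is simply not available for them; this is why $M_{\nd}$ and $M_{\dia}$ must be treated by different mechanisms. Without the $J=5$ threshold and the $M_{\nd}/M_{\dia}$ split, the inductive step does not go through.
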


For the proof we introduce

\begin{definition}[interactive cube]\label{def:diagonal}
Let $r_0>0$ be as in \ref{eq:cond.U} and let
\begin{equation}
\mathbb{D}_{r_0}=\left\{\Bx=(x_1,x_2)\in\DZ^{2d}: |x_1-x_2|\leq r_0\right\}.
\end{equation}
A two-particle cube $\BC_L(\Bu)$ is called interactive (I) when $\BC_L(\Bu)\cap\mathbb{D}_{r_0}\neq\varnothing$. Otherwise it is called non-interactive (NI). 
\end{definition}

\begin{remarque*}
The interaction potential $\BU$ vanish identically on any non-interactive cube.
\end{remarque*}

The procedure of deducing property $\dskone$ from $\dsk$ is done separately for the following three cases:
\begin{enumerate}[(I)]
\item
Both $\BC_{L_{k+1}}(\Bx)$ and $\BC_{L_{k+1}}(\By)$ are NI-cubes.
\item
Both $\BC_{L_{k+1}}(\Bx)$ and $\BC_{L_{k+1}}(\By)$ are I-cubes.
\item
One of the cubes is I, while the other is NI.
\end{enumerate}
More precisely:
\begin{enumerate}[(i)]
\item
In Section \ref{sec:non-interactive.cubes} we prove $\dsk^{\text{(I)}}$ for any $k\geq 0$.
\item
In Section \ref{sec:interactive.cubes} we prove $\dsk^{\text{(I,II)}}\implies\dskone^{\text{(II)}}$.
\item
In Section \ref{sec:mixed.cubes} we prove $\dsk^{\text{(I,II)}}\implies\dskone^{\text{(III)}}$. 
\end{enumerate}
All cases require the use of property $\wone$ and/or $\wtwo$.

\section{Case (I). Non-interactive pair of singular cubes}\label{sec:non-interactive.cubes}

In this section, we aim to prove $\dsk$ for any $k\geq 0$ and any pair of $L_k$-distant non-interactive cubes $\BC_{L_k}(\Bx)$ and $\BC_{L_k}(\By)$. In that case the interaction vanishes and we are mainly reduced to the one-particle case.

Let $\BC_{L_k}(\Bu)\subset\DZ^{2d}$ be a non-interactive cube, $\Bu=(u_1,u_2)$:
\begin{equation} \label{eq:cartesian,product}
\BC_{L_k}(\Bu)=C_{L_k}(u_1)\times C_{L_k}(u_2).
\end{equation}
Since $\BU$ vanishes on the non-interactive cube $\BC_{L_k}(\Bu)$, we have, for $\Bx=(x_1,x_2)\in \BC_{L_k}(\Bu)$,
\begin{equation}\label{eq:decomp,H}
(\BH_{\BC_{L_k}(\Bu)}\BPsi)(\Bx)=\sum_{|\By|_1=1}\BPsi(\Bx+\By)+\left(V(x_1,\omega)+V(x_2,\omega)\right)\BPsi(\Bx),
\end{equation}
which can be written (take $\BPsi=\Psi_1\otimes\Psi_2$)
\begin{equation}\label{eq; decompalg,H}
\BH_{\BC_{L_k}(\Bu)}=H^{(1)}_{C_{L_k}(u_1)}\otimes I^{(2)}+I^{(1)}\otimes H^{(2)}_{C_{L_k}(u_2)}.
\end{equation}
Here $H^{(j)}_{C_{L_k}(u_j)}$ is the single-particle Hamiltonian acting on $\Psi_j$, $x_j\in C_{L_k}(u_j)$, $j=1,2$:
\begin{equation}\label{eq:def,single,H}
\left(H^{(j)}_{C_{L_k}(u_j)}\Psi_j\right)(x_j)=\sum_{\substack{y_j\in C_{L_k}(u_j)\\ |y_j|=1}}\Psi_j(x_j+y_j)+V(x_j,\omega)\Psi_j(x_j)
\end{equation}
and $I^{(\hat j)}$ is the identity operator on the complementary space.

In the proof we are using the validity of the bound $\dskun$ for one-particle random Schr\"odinger operators like $H^{(j)}$ at low energies, provided $E^*$ is sufficiently close to $E_0$:

\begin{definition}[$m$-tunnelling]\label{def:tunnelling}
Let $I=[E^0,E^*]$ with $E^*>E^0$ and $m>0$ be fixed. 
\begin{enumerate}[(i)]
\item
A single-particle cube $C_{L_k}(u)\subset\DZ^d$ is called $m$-tunnelling ($m$-T) if there exists $E\in I$ and two disjoint cubes $C_{L_{k-1}}(v_1)$, $C_{L_{k-1}}(v_2)\subset C_{L_k}(u)$ which are $(E,m)$-S with respect to an Hamiltonian like $H^{(j)}$, $j=1,2$. Otherwise it is called $m$-non tunnelling ($m$-NT).
\item
A two-particle non-interavtive cube $\BC_{L_k}(\Bu)=C_{L_k}(u_1)\times C_{L_k}(u_2)\subset\DZ^{2d}$ is called $m$-non-tunnelling if the single-particle cubes $C_{L_k}(u_1)$ and $C_{L_k}(u_2)$ are $m$-NT with respect to $H^{(1)}$ and $H^{(2)}$, respectively. Otherwise, it is called $m$-tunnelling.
\end{enumerate}
\end{definition}

The following statement gives a formal description of a property of NI-cubes which will be refer to as $\ndrons$ (``Non-interactive cubes are Resonant or Non-Singular'').

\begin{lemme}[\cite{CS09}*{Lemma 3.2}]\label{lem:NDRoNS}
Let $\BC_{L_k}(\Bu)=C_{L_k}(u_1)\times C_{L_k}(u_2)\subset\DZ^{2d}$ be a two-particle cube such that
\begin{enumerate}[\rm(i)]
\item
$|u_1-u_2|>2L_k+r_0$,
\item
$\BC_{L_k}(\Bu)$ is $m'$-NT for some given $m'>0$, 
\item
$\BC_{L_k}(\Bu)$ is $E$-CNR for some $E\in\DR$. 
\end{enumerate}
Then $\BC_{L_k}(\Bu)$ is $(E,m)$-NS, with
\begin{equation}     \label{eq:NDRoNS}
m=m'-L_k^{-1}\ln (2L_k+1)^d.
\end{equation}
In particular, if $L_k^{-1}\ln (2L_k+1)^d\leq \frac{m'}{2}$, (which is true for sufficiently large $L_0$), then $m\geq \frac{m'}{2}$.
\end{lemme}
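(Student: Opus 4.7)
The plan is to exploit the tensor-product structure of $\BH_{\BC_{L_k}(\Bu)}$ that becomes available on a non-interactive cube. Hypothesis (i), combined with the short-range condition \eqref{eq:cond.U}, forces $\BU\equiv 0$ on $\BC_{L_k}(\Bu)=C_{L_k}(u_1)\times C_{L_k}(u_2)$, so the restricted Hamiltonian decomposes as the tensor sum
\[
\BH_{\BC_{L_k}(\Bu)}=H^{(1)}_{C_{L_k}(u_1)}\otimes I^{(2)}+I^{(1)}\otimes H^{(2)}_{C_{L_k}(u_2)}.
\]
Let $\{\psi_a^{(1)},\lambda_a\}$ and $\{\phi_b^{(2)},\mu_b\}$ denote the eigensystems of the two single-particle restrictions. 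The Green function admits the representation
\[
\BG_{\BC_{L_k}(\Bu)}(\Bu,\Bv;E)=\sum_a \psi_a^{(1)}(u_1)\,\overline{\psi_a^{(1)}(v_1)}\,G^{(2)}_{C_{L_k}(u_2)}(u_2,v_2;E-\lambda_a),
\]
together with its symmetric counterpart in which the roles of the two particles are swapped.

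Since $\Bv\in\partial^-\BC_{L_k}(\Bu)$, at least one of $|v_1-u_1|$ and $|v_2-u_2|$ equals $L_k$, and I will pick the representation so that the boundary coordinate sits inside the single-particle resolvent factor; without loss of generality $|v_2-u_2|=L_k$, so $v_2\in\partial^-C_{L_k}(u_2)$, and I use the formula above. The heart of the argument is to show that for \emph{every} $\lambda_a\in\sigma(H^{(1)}_{C_{L_k}(u_1)})$ the single-particle cube $C_{L_k}(u_2)$ is $(E-\lambda_a,m')$-NS. The $m'$-NT hypothesis (ii) transfers directly to $C_{L_k}(u_2)$, and I use hypothesis (iii) to establish $(E-\lambda_a)$-CNR: a putative single-particle $(E-\lambda_a)$-resonance involving some $\mu'\in\sigma(H^{(2)}_{C_{\ell'}(w_2)})$ on a sub-cube $C_{\ell'}(w_2)\subset C_{L_k}(u_2)$ of size $\ell'\ge L_k^{1/\alpha}$ has to be converted, by choosing an appropriate companion $C_{\ell'}(w_1)\subset C_{L_k}(u_1)$, into an $E$-resonance of the 2-particle sub-cube $C_{\ell'}(w_1)\times C_{\ell'}(w_2)\subset\BC_{L_k}(\Bu)$, contradicting $E$-CNR. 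The single-particle analogue of the present lemma (see \cite{CS09}) then delivers $|G^{(2)}_{C_{L_k}(u_2)}(u_2,v_2;E-\lambda_a)|\le\eul^{-m'L_k}$ for each $a$.

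Combining this with the pointwise bound $|\psi_a^{(1)}(x)|\le 1$ coming from $L^2$-normalization, and summing the $|C_{L_k}(u_1)|=(2L_k+1)^d$ contributions, I conclude
\[
|\BG_{\BC_{L_k}(\Bu)}(\Bu,\Bv;E)|\leq (2L_k+1)^d\,\eul^{-m'L_k}=\eul^{-(m'-L_k^{-1}\ln(2L_k+1)^d)\,L_k}=\eul^{-mL_k},
\]
which is the desired $(E,m)$-NS bound. The step that will need the most care is the transfer of the CNR property across scales: the eigenvalues $\lambda_a$ are eigenvalues of the \emph{full} cube $C_{L_k}(u_1)$ and are not, in general, eigenvalues of any of its sub-cubes, so aligning the single-particle sub-cube $C_{\ell'}(w_2)$ with a suitable $C_{\ell'}(w_1)$ to convert a single-particle resonance at the shifted energy $E-\lambda_a$ into a genuine 2-particle $E$-resonance is the main technical point; once that is in place, the rest is elementary linear algebra plus one invocation of the single-particle NDRoNS.
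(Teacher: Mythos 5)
Your algebraic skeleton is right and does match the origin of the loss factor in \eqref{eq:NDRoNS}: on an NI cube the Hamiltonian tensor-decomposes, the Green function expands as a sum over the $(2L_k+1)^d$ eigenvalues $\lambda_a$ of $H^{(1)}_{C_{L_k}(u_1)}$, each term is (supposed to be) bounded by $\eul^{-m'L_k}$, and summing gives exactly $(2L_k+1)^d\eul^{-m'L_k}=\eul^{-mL_k}$. (The paper itself gives no proof of this lemma and simply refers to \cite{CS09}; so the comparison here is with the argument one must supply, not with one printed in the paper.) The problem is the step you yourself flagged as ``the main technical point'': the transfer from two-particle $E$-CNR to single-particle $(E-\lambda_a)$-CNR of $C_{L_k}(u_2)$ does not go through via the pairing of sub-cubes you propose, and as written this is a real gap, not a detail.

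Concretely: a failure of $(E-\lambda_a)$-CNR for $C_{L_k}(u_2)$ produces a sub-cube $C_{\ell'}(w_2)\subset C_{L_k}(u_2)$ with $L_k^{1/\alpha}\le\ell'<L_k$ and a $\mu'\in\sigma\bigl(H^{(2)}_{C_{\ell'}(w_2)}\bigr)$ with $|\lambda_a+\mu'-E|<\eul^{-\ell'^\beta}$. To contradict (iii) you must exhibit a companion $C_{\ell'}(w_1)\subset C_{L_k}(u_1)$ such that $\sigma\bigl(H^{(1)}_{C_{\ell'}(w_1)}\bigr)+\sigma\bigl(H^{(2)}_{C_{\ell'}(w_2)}\bigr)$ comes within $\eul^{-\ell'^\beta}$ of $E$, which requires some $\lambda'\in\sigma\bigl(H^{(1)}_{C_{\ell'}(w_1)}\bigr)$ near $\lambda_a$. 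But $\lambda_a$ is an eigenvalue of the \emph{full}-cube Dirichlet operator $H^{(1)}_{C_{L_k}(u_1)}$, and restricting to a smaller Dirichlet sub-cube $C_{\ell'}(w_1)$ changes the spectrum with no interlacing or approximation you can invoke here; there is no reason for any $\lambda'$ to land near $\lambda_a$. The only scale at which the conversion is legitimate is $\ell'=L_k$, and that only gives $(E-\lambda_a)$-NR of the outer cube $C_{L_k}(u_2)$, not CNR. Plain NR of the outer cube together with $m'$-NT is not enough to run the single-particle NDRoNS: the von~Dreifus--Klein iterated resolvent needs the intermediate-scale single-particle sub-cubes non-resonant at the shifted energy as well, and that is precisely what you have failed to supply. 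Closing this requires a different mechanism (this is what \cite{CS09}*{Lemma 3.2} actually does, and you should consult it rather than reinvent it); you should also, while you are at it, record the observation that non-negativity of $V$ keeps $E-\lambda_a$ either inside the one-particle MSA interval or strictly below $\sigma(H^{(2)})$, where Combes--Thomas takes over.
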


\begin{proof}
See \cite{CS09}*{Lemma 3.2}. This property is established by combining known results from the single-particle localisation theory established via MSA \cite{FMSS85} or FMM \cite{AM93}. 
\end{proof}

\begin{theo}\label{thm:DSk.ND-cubes}
Let $p>0$ be fixed. There exist, $L_1^*<\infty$ and such that for any $L_0\geq L_1^*$ and any $k\geq 0$, the estimate $\dsk_{m_k,p,L_0,E^*}$ holds true for any pair of $L_k$-distant non interactive cubes of size $L_k$.
\end{theo}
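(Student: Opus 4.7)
The plan is to exploit the tensor-product decomposition \eqref{eq; decompalg,H}, which holds on any non-interactive cube since $\BU$ vanishes there, together with Lemma \ref{lem:NDRoNS} (the NDRoNS property), in order to reduce the bad event for a pair of NI cubes to a combination of single-particle tunnelling events (controlled by $\dskun$) and a two-particle resonance event (controlled by $\wtwo$). No induction on $k$ is actually needed: the one-particle estimate of Theorem \ref{thm:DS.kun} is already available at every scale.

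For the base case $k=0$ I would simply invoke Theorem \ref{thm:initial.scale}: property $\szero$ trivially implies $\dszero$ for an arbitrary pair of cubes, in particular for NI pairs.

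For $k\geq 1$, I would fix two $L_k$-distant non-interactive cubes $\BC_1=\BC_{L_k}(\Bu)$ and $\BC_2=\BC_{L_k}(\Bv)$, and split the bad event
\[
\mathcal{A}:=\{\exists\,E\in I:\BC_1\text{ and }\BC_2\text{ are both }(E,m_k)\text{-S}\}
\]
using the tunnelling events $\mathcal{T}_i:=\{\BC_i\text{ is }m_{k-1}\text{-T}\}$, $i=1,2$. On the complement $\mathcal{A}\setminus(\mathcal{T}_1\cup\mathcal{T}_2)$, Lemma \ref{lem:NDRoNS} applied at scale $L_k$ with $m'=m_{k-1}$ forces that neither cube is $E$-CNR for the incriminating $E$, provided the mass-loss $L_k^{-1}\ln(2L_k+1)^d$ is dominated by the scheduled drift $\gamma\,m_{k-1}L_k^{-1/2}$ from $m_{k-1}$ to $m_k$, which holds for $L_0$ sufficiently large. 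Then $\wtwo$ bounds this event by $L_k^{-q}$, with $q>4p$. For each tunnelling event, I would use that $\BC_i=C_{L_k}(u_1)\times C_{L_k}(u_2)$ is $m_{k-1}$-T iff at least one of its single-particle factors is $m_{k-1}$-T with respect to the corresponding operator $H^{(j)}$; a union bound over the $O(L_k^{4d})$ pairs of disjoint $L_{k-1}$-subcubes together with Theorem \ref{thm:DS.kun} gives a bound of order $L_k^{2d}\cdot L_{k-1}^{-2\tilde p}=L_k^{2d-4\tilde p/3}$ (using $\alpha=3/2$, so that $L_{k-1}=L_k^{2/3}$).

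The main obstacle is the parameter bookkeeping: I must verify that the single-particle exponent $\tilde p$ is comfortably large enough so that $4\tilde p/3-2d>2p$ (together with a bit of slack to absorb a constant factor from the combinatorial count), and that the NDRoNS mass-loss is dominated by the scheduled drift of the sequence $m_k$. Both requirements are secured by the paper's numerical choices $\alpha=3/2$, $\tilde p>\tfrac{9}{4}p+\tfrac{15}{2}d$, and $L_0$ sufficiently large. Adding the three contributions then yields $\mathbb{P}(\mathcal{A})\leq L_k^{-2p}$, which is exactly $\dsk$ restricted to non-interactive pairs.
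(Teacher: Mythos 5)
Your proposal reproduces the paper's own argument: decompose the bad event into a two-particle resonance event (controlled by $\wtwo$) and two tunnelling events (controlled by the one-particle estimate $\dskun$ of Theorem~\ref{thm:DS.kun}), with Lemma~\ref{lem:NDRoNS} supplying the bridge, and handle $k=0$ directly via $\szero$. The only deviation is cosmetic: the paper sets the tunnelling threshold at $2m_k$ and then invokes the ``$m\geq m'/2$'' clause of Lemma~\ref{lem:NDRoNS}, whereas you use $m_{k-1}$ and absorb the NDRoNS mass loss into the scheduled drift $m_{k-1}\to m_k$; your version actually lines up more transparently with the mass $m_{k-1}$ appearing in $\dskun$ at scale $L_{k-1}$. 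One small slip: you first quote ``$O(L_k^{4d})$ pairs of disjoint $L_{k-1}$-subcubes'' but then (correctly) use $L_k^{2d}$ in the estimate; the relevant combinatorial count is over pairs of single-particle $L_{k-1}$-subcubes in a $d$-dimensional factor, hence $O(L_k^{2d})$ (times $2$ for the two factors), and in any case the polynomial prefactor is irrelevant given $\tilde p>\tfrac94p+\tfrac{15}{2}d$.
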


\begin{proof}
Let $I=[E^0,E^*]$. We already prove $\dszero$ for some $E^*>E^0$ and some $m_0>0$. Let $k\geq 1$. Let $\BC_{L_k}(\Bx)$ and $\BC_{L_{k}}(\By)$ be two non-interactive $L_k$-distant cubes. We consider the events
\begin{align*}
\rB_{k}&=\left\{\exists\,E\in I: \BC_{L_k}(\Bx),\ \BC_{L_k}(\By)\text{ are both } (E,m_k)\text{-S}\right\},\\
\rR & =\left\{\exists\,E\in I:\text{neither }\BC_{L_k}(\Bx)\text{ nor }\BC_{L_k}(\By) \text{ is $E$-CNR}\right\},\\
\rT_{\Bx} & =\left\{\BC_{L_k}(\Bx) \text{ is $2m_k$-T}\right\}, \\
\rT_{\By} & =\left\{\BC_{L_k}(\By) \text{ is $2m_k$-T}\right\}
\end{align*}

Let $\omega\in\rB\setminus\rR$, then $\forall E\in I$, $\BC_{L_k}(\Bx)$, or $\BC_{L_k}(\By)$ is $E$-CNR. If $\BC_{L_k}(\By)$ is $E$-CNR, then it must be $2m_k$-T: otherwise, it would have been $(E,m_k)$-NS by Lemma~\ref{lem:NDRoNS}. Similarly, if $\BC_{L_k}(\Bx)$ is $E$-CNR, then it must be $2m_k$-T. This implies that
\begin{equation}\label{eq:def,Bk}
\rB\subset\rR\cup \rT_{\Bx}\cup\rT_{\By}.
\end{equation}
We estimate $\prob{\rR}$ using property $\wtwo_q$ which holds true by Lemma \ref{lem:validity.W1.W2}:
\[
\prob{\rR}\leq L_k^{-q}.
\]
We estimate $\prob{\rT_{\Bx}}$, and similarly $\prob{\rT_{\By}}$, by using Theorem \ref{thm:DS.kun} 
\[
\prob{\rT_{\Bx}}\leq \frac{(2L_k+1)}{2}^{4d}L_{k-1}^{-2\tilde{p}}
\]
\begin{align*}
\prob{\rB_k} 
& \leq \prob{\rR}+\prob{\rT_{\Bx}}+\prob{\rT_{\By}}\\
&\leq L_k^{-q} + 2 \frac{(2L_k+1)}{2}^{4d}L_{k-1}^{-2\tilde{p}}\\
&\leq L_k^{-4p} + C(d) L_k^{4d-\frac{2\tilde{p}}{\alpha}} \qquad \notag{\text{(since $q>4p$)}}\\
&\leq \frac{1}{2}L_k^{-2p} +\frac{1}{2}L_k^{-2p}
\end{align*}
since $\tilde p>\frac{9}{4}p+\frac{15}{2}\,d$ by assumption.
\end{proof}

We end this section with a lemma on non-inteactive cubes which will be useful in the next sections.

\begin{lemme}\label{lem:property.ND.cubes}
Let $\BC_{L_{k+1}}(\Bu)$ be a two-particle cube of size $L_{k+1}$. For $E\in\DR$, we denote by $M_{\nd}(\BC_{L_{k+1}}(\Bu),E)$ the maximal number of pairwise $L_k$-distant, non-interactive $(E,m_k)$-S cubes $\BC_{L_{k}}(\Bu^{(j)})\subset\BC_{L_{k+1}}(\Bu)$. Then,
\begin{equation}
\prob{\exists\,E\in I: M_{\nd}(\BC_{L_{k+1}}(\Bu),E)\geq 2}
\leq \frac{(2L_{k+1}+1)}{2}^{4d}(L_k^q+C(d)L_k^{4d-\frac{2\tilde{p}}{\alpha}}).
\end{equation}
\end{lemme}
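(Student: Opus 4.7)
The plan is to reduce the statement to a direct union bound combined with the pair estimate already established inside the proof of Theorem \ref{thm:DSk.ND-cubes}. The argument has essentially no new ingredients; it only reorganises the bookkeeping.

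First I would observe that if $M_{\nd}(\BC_{L_{k+1}}(\Bu),E)\ge 2$ for some $E\in I$, then by definition there exist two centres $\Bu^{(1)},\Bu^{(2)}\in\BC_{L_{k+1}}(\Bu)$ such that $\BC_{L_k}(\Bu^{(1)})$ and $\BC_{L_k}(\Bu^{(2)})$ are non-interactive, $L_k$-distant, and simultaneously $(E,m_k)$-S. Consequently
\[
\{\exists E\in I: M_{\nd}\ge 2\}\ \subset\ \bigcup_{\{\Bu^{(1)},\Bu^{(2)}\}}\{\exists E\in I: \BC_{L_k}(\Bu^{(1)}) \text{ and } \BC_{L_k}(\Bu^{(2)}) \text{ are both } (E,m_k)\text{-S}\},
\]
where the union runs over all unordered pairs of centres in $\BC_{L_{k+1}}(\Bu)$. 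The number of such pairs is bounded by $\tfrac{1}{2}|\BC_{L_{k+1}}(\Bu)|^{2}\le\tfrac{1}{2}(2L_{k+1}+1)^{4d}$.

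Next I would apply the pair estimate extracted from the proof of Theorem \ref{thm:DSk.ND-cubes}: for any fixed pair of $L_k$-distant non-interactive cubes, the inclusion $\rB\subset\rR\cup\rT_{\Bu^{(1)}}\cup\rT_{\Bu^{(2)}}$ together with property $\wtwo$ (Lemma \ref{lem:validity.W1.W2}) and the one-particle DS-estimate $\dskun$ (Theorem \ref{thm:DS.kun}) yields
\[
\mathbb{P}\bigl\{\exists E\in I:\BC_{L_k}(\Bu^{(1)}),\BC_{L_k}(\Bu^{(2)}) \text{ are both } (E,m_k)\text{-S}\bigr\}\ \le\ L_k^{-q}+C(d)\,L_k^{4d-\frac{2\tilde p}{\alpha}}.
\]
Multiplying this single-pair bound by the number of admissible pairs gives exactly the claimed estimate, up to the factor $\tfrac{(2L_{k+1}+1)^{4d}}{2}$ written in the statement.

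Since every tool has already been prepared, there is no genuine obstacle here. The only point requiring a little care is purely editorial: the proof of Theorem \ref{thm:DSk.ND-cubes} lumps its pair bound together with the final comparison against $L_k^{-2p}$, so I would want to isolate that intermediate inequality cleanly—ideally by factoring it out as a short auxiliary statement—so that it can be invoked verbatim in the union bound above without re-deriving the $\rR\cup\rT_{\Bu^{(1)}}\cup\rT_{\Bu^{(2)}}$ decomposition a second time.
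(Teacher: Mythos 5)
Your proof is correct and follows exactly the paper's own argument: a union bound over the at most $\tfrac{1}{2}(2L_{k+1}+1)^{4d}$ unordered pairs of centres, combined with the single-pair probabilistic bound $\prob{\rB_k}\le L_k^{-q}+C(d)L_k^{4d-2\tilde p/\alpha}$ already obtained in the proof of Theorem \ref{thm:DSk.ND-cubes}. (Note in passing that the paper's displayed bound has a typo, writing $L_k^q$ where it should read $L_k^{-q}$; your version is the correct one.)
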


\begin{proof}
The total number of possible pairs of centres $\Bu^{(1)}$, $\Bu^{(2)}$ is bounded by $\frac{1}{2}(2L_{k+1}+1)^{4d}$ while for a given pair of centres one can apply the probabilistic bound, i.e. $\prob{\rB_k}$ for a pair of $L_k$-distant non-interactive cubes of side length $2L_k$.
\end{proof}

\section{Case (II). Interactive pairs of singular cubes}\label{sec:interactive.cubes}

We assume here that  for a pair of $L_k$-distant two-particle cubes $\BC_{L_k}(\Bx)$ and $\BC_{L_k}(\By)$ we have that $\mathbb{P}(\rB_k)\leq L_k^{-2p}$. 

Before we proceed further, let us state a geometric assertion borrowed from \cite{CS09}. Given a two-particle cube $\BC_L(\Bu)=C_L(u_1)\times C_L(u_2)\subset\DZ^d\times\DZ^d$ we denote
\begin{equation}\label{eq:projection,2-cube}
\varPi\BC_L(\Bu)=C_L(u_1)\cup C_L(u_2)\subset\DZ^d
\end{equation}
the union of the projections of $\BC_L(\Bu)$ on the two factors of the product $\DZ^d\times\DZ^d$.

\begin{lemme}[\cite{CS09}*{Lemma 4.1}]\label{lem:disjointness.projections.2-cube}
Let $L>r_0$. If $\BC_{L}(\Bu)$ and $\BC_{L}(\Bv)$ are two interactive $L$-distant two-particle cubes, then
\begin{equation}
\varPi\BC_L(\Bu)\cap\varPi\BC_L(\Bv)=\varnothing.
\end{equation}
\end{lemme}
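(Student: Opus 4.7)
\medskip

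The plan is to argue by contradiction: assume that some single-particle site $z\in\DZ^d$ lies simultaneously in $\varPi\BC_L(\Bu)$ and in $\varPi\BC_L(\Bv)$, and derive a two-particle pair in $\BC_L(\Bu)\times\BC_L(\Bv)$ whose symmetrized distance is at most $4L+2r_0$, contradicting the hypothesis $d_\BS(\BC_L(\Bu),\BC_L(\Bv))>8L$ once we use $L>r_0$.

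First I would unwind the definitions. Interactivity gives points $\Bx=(x_1,x_2)\in\BC_L(\Bu)$ and $\By=(y_1,y_2)\in\BC_L(\Bv)$ with $|x_1-x_2|\le r_0$ and $|y_1-y_2|\le r_0$. A shared projection site $z$ means $z\in C_L(u_i)\cap C_L(v_j)$ for some $i,j\in\{1,2\}$, and by the triangle inequality $|u_i-v_j|\le 2L$. I then split into the two sub-cases $i=j$ and $i\ne j$.

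In the case $i=j$ (say $i=j=1$), I estimate
\[
|x_1-y_1|\le|x_1-u_1|+|u_1-v_1|+|v_1-y_1|\le L+2L+L=4L,
\]
and then, using the interactivity witnesses,
\[
|x_2-y_2|\le|x_2-x_1|+|x_1-y_1|+|y_1-y_2|\le r_0+4L+r_0=4L+2r_0.
\]
Hence $|\Bx-\By|\le 4L+2r_0$, so $d_\BS(\Bx,\By)\le 4L+2r_0<8L$ because $r_0<L$, contradicting the $L$-distance assumption. The case $i=j=2$ is identical after swapping indices.

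In the crossed case $i\ne j$ (say $i=1$, $j=2$), the same bounds give $|x_1-y_2|\le 4L$ and then $|x_2-y_1|\le r_0+4L+r_0=4L+2r_0$. This time I use the symmetry $\BS$: writing $\BS\Bx=(x_2,x_1)$, one has
\[
|\BS\Bx-\By|=\max\bigl(|x_2-y_1|,|x_1-y_2|\bigr)\le 4L+2r_0,
\]
so again $d_\BS(\Bx,\By)\le|\BS\Bx-\By|<8L$, a contradiction. The only delicate point is remembering to use the symmetrized distance exactly in the crossed case, which is precisely where the symmetry $\BS$ entering Definition~\ref{def:distant} becomes indispensable; otherwise the proof is a routine triangle-inequality exercise.
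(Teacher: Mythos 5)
Your proof is correct: the case split on whether the shared projection site lies in matching or crossed factors, together with the triangle-inequality bounds $|u_i-v_j|\le 2L$ and the interactivity witnesses, does yield $d_{\BS}(\Bx,\By)\le 4L+2r_0<8L$ and hence the desired contradiction. The paper itself only cites \cite{CS09}*{Lemma 4.1} for this statement, and your argument is essentially the standard one given there, so nothing further is needed.
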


\begin{proof}
See \cite{CS09}*{Lemma 4.1}.
\end{proof}

Lemma \ref{lem:disjointness.projections.2-cube} is used in the proof of Lemma \ref{lem:property.D.cubes} which, in turn, is important in establishing the inductive step for a pair of distant interactive cubes.

\begin{lemme}\label{lem:property.D.cubes}
Assume that $\dsk$ holds true for all pairs of $L_k$-distant interactive cubes. Consider a two-particle cube $\BC_{L_{k+1}}(\Bu)$ and denote by $M_{\dia}(\BC_{L_{k+1}}(\Bu),E)$  the maximal number of $(E,m_k)$-S, pairwise $L_k$-distant interactive cubes $\BC_{L_{k}}(\Bu^j)\subset \BC_{L_{k+1}}(\Bu)$. Then for all $n\geq 1$,
\begin{equation}
\mathbb{P}\left\{\exists\,E\in I: M_{\dia}(\BC_{L_{k+1}}(\Bu),E)\geq 2n\right\}
\leq C(n,d) L_{k}^{4nd\alpha} L_{k}^{-2np}.
\end{equation}
\end{lemme}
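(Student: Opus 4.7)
The plan is to perform a union bound over all ordered $2n$-tuples of centers in $\BC_{L_{k+1}}(\Bu)$, then break the tuple into $n$ pairs, apply $\dsk$ to each pair, and multiply using independence. The combinatorial count of tuples is immediate: the number of ordered $2n$-tuples $(\Bu^{(1)}, \ldots, \Bu^{(2n)})$ in $\BC_{L_{k+1}}(\Bu)$ is at most $(2L_{k+1}+1)^{4nd}$, which, using $L_{k+1}\leq L_k^{\alpha}$, is bounded by $C(n,d)\,L_k^{4nd\alpha}$, with the symmetry factor $1/(2n)!$ absorbed into $C(n,d)$.

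Next, observe that on the event $\{M_{\dia}(\BC_{L_{k+1}}(\Bu),E)\geq 2n\}$ for some $E\in I$, one can find $2n$ pairwise $L_k$-distant interactive cubes $\BC_{L_k}(\Bu^{(j)})\subset \BC_{L_{k+1}}(\Bu)$ that are simultaneously $(E,m_k)$-S. Group them arbitrarily into $n$ pairs, e.g.\ $\bigl(\BC_{L_k}(\Bu^{(2i-1)}),\,\BC_{L_k}(\Bu^{(2i)})\bigr)$, $i=1,\ldots,n$. The existence of a common $E$ making all $2n$ cubes singular implies the same $E$ makes both cubes in each pair $(E,m_k)$-S, so
\[
\bigl\{\exists E\in I: M_{\dia}\geq 2n\bigr\}\subset \bigcap_{i=1}^{n}\bigl\{\exists E\in I: \BC_{L_k}(\Bu^{(2i-1)})\text{ and }\BC_{L_k}(\Bu^{(2i)}) \text{ are both }(E,m_k)\text{-S}\bigr\}.
\]
By hypothesis $\dsk$, each factor event has probability at most $L_k^{-2p}$.

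The decisive step is the independence argument that turns an intersection bound into a product bound. Since all $2n$ cubes are pairwise $L_k$-distant and interactive, Lemma~\ref{lem:disjointness.projections.2-cube} yields that the projections $\varPi\BC_{L_k}(\Bu^{(j)})\subset\DZ^{d}$ are pairwise disjoint for $j=1,\ldots,2n$. The event that a pair is jointly $(E,m_k)$-S for some $E\in I$ is measurable with respect to the random variables $\{V(x,\omega): x\in \varPi\BC_{L_k}(\Bu^{(2i-1)})\cup \varPi\BC_{L_k}(\Bu^{(2i)})\}$, because $\BU$ is deterministic and the finite-volume operator on a cube depends only on $V$ at the projected sites. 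Disjointness across pairs and the i.i.d.\ structure of $V$ then give independence of these $n$ events.

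Multiplying the per-pair bounds yields probability at most $L_k^{-2np}$ for a fixed tuple, and summing over the at most $C(n,d)\,L_k^{4nd\alpha}$ tuples produces the stated bound. The main obstacle, and the one that this scheme handles cleanly, is precisely the passage from an $\exists E$ statement about the whole collection to independent $\exists E$ statements about individual pairs; the key enabling fact is Lemma~\ref{lem:disjointness.projections.2-cube}, without which the random variables controlling distinct pairs could overlap and independence would fail.
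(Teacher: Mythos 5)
Your proof is correct and follows essentially the same route as the paper: group the $2n$ pairwise $L_k$-distant interactive singular cubes into $n$ pairs, use Lemma~\ref{lem:disjointness.projections.2-cube} to get disjoint projections and hence independence of the pair events, apply $\dsk$ to each pair, and finish with the combinatorial union bound over at most $C(n,d)L_k^{4nd\alpha}$ tuples. (The displayed inclusion is stated a bit loosely — the left-hand event should be the event for a fixed tuple, with the union over tuples supplying the combinatorial factor — but the surrounding prose makes clear this is what you mean.)
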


\begin{proof}
Suppose that there exist interactive cubes $\BC_{L_k}(\Bu^j)\subset \BC_{L_{k+1}}(\Bx)$,
$1 \le j \le 2n$, such that any two of them are $L_k$-distant. By Lemma \ref{lem:disjointness.projections.2-cube}, for any pair $\BC_{L_k}(\Bu^{2i-1})$,  $\BC_{L_k}(\Bu^{2i})$, the respective random operators $\BH_{\BC_{L_k}(\Bu^{2i-1})}$ and $\BH_{\BC_{L_k}(\Bu^{2i})}$ are independent and so are their spectra and Green functions. Moreover the pairs of operators 
\[
\left(\BH_{\BC_{L_k}(\Bu^{2i-1})}(\omega),\BH_{\BC_{L_k}(\Bu^{2i})}(\omega)\right),\ \ i=1,\dots,n,
\] 
form an independent family. The operator $\BH_{\BC_{L_k}(\Bu^i)}$, with $i=1,\dots,2n$ is indeed measurable relative to the sigma-algebra $\mathcal{B}_i$ generated by the random variables 
\[
\left\{V(x,\omega),x\in \varPi \BC_{L_k}(\Bu^i)\right\}.
\] 
Now by Lemma \ref{lem:disjointness.projections.2-cube}, the sets $\varPi\BC_{L_k}(\Bu^i)$, $i=1,\dots,2n$, are pairwise disjoint, so that all sigma-algebras $\mathcal{B}_i$, $i=1,\dots,2n$ are independent. Thus, any collection of events $\mathrm{A}_1,\dots,\mathrm{A}_{n}$ relative to the corresponding pairs
\[
\left(\BH_{\BC_{L_k}(\Bu^{2i-1})}(\omega),\BH_{\BC_{L_k}(\Bu^{2i})}(\omega)\right),\ \ i=1,\dots,n,
\]
also form an independent family. For $i=1,\dots,n$, set:
\begin{equation*}
\mathrm{A}_i=\left\{\exists\,E\in I:\text{ both }\ \BC_{L_k}(\Bu^{2i-1}) \text{ and } \BC_{L_k}(\Bu^{2i+2})\ \text{ are }\ (E,m)\text{-S}\right\}.
\end{equation*}
Then by virtue of the inductive assumption,
\begin{equation}\label{eq:events,Aj}
\mathbb{P}(\mathrm{A}_i)\leq L_k^{-2p},\quad 1\leq i\leq n,
\end{equation}\label{eq:ind.,events,Aj}
and owing to independence  of events $\mathrm{A}_1,\dots,\mathrm{A}_{n}$, we obtain
\begin{equation} \label{eq:prob,intersect.,Aj}
\mathbb{P}\left\{\bigcap_{i=1}^{n}\mathrm{A}_i\right\}=\prod_{j=1}^{n}\mathbb{P}(\mathrm{A}_j)\leq \left(L_k^{-2p}\right)^n
\end{equation}
To complete the proof note that the total number of different families of $2n$ cubes $\BC_{L_k}\subset\BC_{L_{k+1}}(\Bx)$ with required properties is bounded from above by
\[
\frac{1}{(2n)!}\left|\BC_{L_{k+1}(\Bu)}\right|^{2n}\leq C(n,d)L_k^{4dn\alpha}
\]
\end{proof}

\begin{lemme}\label{lem:(E,J)-CNR,implies,NS}
Let $J\geq 1$ be an odd integer and $E\in\DR$. Let $\BC_{L_{k+1}}(\Bu)$ be a cube such that:
\begin{enumerate}[\rm(i)]
\item
$\BC_{L_{k+1}}(\Bu)$ is $E$-CNR,
\item
$M_{\nd}(\BC_{L_{k+1}}(\Bu),E)+M_{\dia}(\BC_{L_{k+1}}(\Bu),E)\leq J$.
\end{enumerate}
Then, for $L_0$ large enough $\BC_{L_{k+1}}(\Bu)$ is $(E,m_{k+1})$-NS for some $m'$ such that 
\[
m_{k+1}\geq m_k\biggl(1-\frac{5J+6}{(2L_k)^{1/2}}\biggr).
\]
\end{lemme}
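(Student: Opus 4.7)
The plan is to iterate the Geometric Resolvent Inequality (GRI) along a chain of cubes leading from $\Bu$ toward a site $\Bv\in\partial^-\BC_{L_{k+1}}(\Bu)$ at which $|\BG_\Lambda(\Bu,\cdot;E)|$ attains its maximum on $\partial^-\Lambda$, with $\Lambda:=\BC_{L_{k+1}}(\Bu)$. I construct points $\Bu_0=\Bu,\Bu_1,\dots,\Bu_N$ inside $\Lambda$ and bound $|\BG_\Lambda(\Bu_i,\Bv;E)|$ in terms of $|\BG_\Lambda(\Bu_{i+1},\Bv;E)|$, each $\Bu_{i+1}$ being farther from $\Bu_0$ by a controlled amount.

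At step $i$, if $\BC_{L_k}(\Bu_i)\subset\Lambda$ is $(E,m_k)$-NS, the GRI yields
\[
|\BG_\Lambda(\Bu_i,\Bv;E)|\leq|\partial^+\BC_{L_k}(\Bu_i)|\,\eul^{-m_k L_k}\,|\BG_\Lambda(\Bu_{i+1},\Bv;E)|,\quad|\Bu_{i+1}-\Bu_i|\geq L_k+1,
\]
producing the desired exponential factor per step. If instead $\BC_{L_k}(\Bu_i)$ is $(E,m_k)$-S, I group together all singular $L_k$-subcubes of $\Lambda$ that are \emph{not} $L_k$-distant from $\BC_{L_k}(\Bu_i)$ into a ``cluster'' around $\Bu_i$, and enclose this cluster in a cube $\BC_{R_i}(\Bu_i)$ of radius $R_i$ comparable to $L_k$. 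By hypothesis (ii), the number of pairwise $L_k$-distant singular $L_k$-cubes in $\Lambda$ is at most $J$, so the enclosed clusters themselves are pairwise $L_k$-distant and at most $J$ in number; the total length ``lost'' to these enclosings along the iteration is thus bounded by a fixed multiple of $JL_k$. Since $\BC_{L_{k+1}}(\Bu)$ is $E$-CNR and $R_i\geq L_k=L_{k+1}^{1/\alpha}$ (recall $\alpha=3/2$), the enclosing cube is $E$-NR, yielding $\|\BG_{\BC_{R_i}(\Bu_i)}(E)\|\leq\eul^{R_i^\beta}$. A further GRI step across this enclosure then gives
\[
|\BG_\Lambda(\Bu_i,\Bv;E)|\leq|\partial^+\BC_{R_i}(\Bu_i)|\,\eul^{R_i^\beta}\,|\BG_\Lambda(\Bu_{i+1},\Bv;E)|,\quad|\Bu_{i+1}-\Bu_i|\geq R_i+1.
\]

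After $N\sim L_{k+1}/L_k=L_k^{1/2}$ iterations the chain reaches $\Bv$, and multiplying all step estimates gives a bound of the form
\[
|\BG_\Lambda(\Bu,\Bv;E)|\leq (CL_k^{2d})^N\,\eul^{-m_k(L_{k+1}-(5J)L_k)+O(JL_k^\beta)}.
\]
For $L_0$ large enough, the polynomial prefactor and the $L_k^\beta$ corrections (with $\beta=1/2$) are absorbed into a slight decrease of the decay rate, yielding the claimed bound $m_{k+1}\geq m_k\bigl(1-(5J+6)/(2L_k)^{1/2}\bigr)$.

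The main technical obstacle is the cluster analysis. One must verify, using the definitions of $L_k$-distant cubes and the hypothesis $M_\nd+M_\dia\leq J$, that along the iteration path at most $O(J)$ singular clusters are encountered and that each is contained in a cube of size proportional to $L_k$ whose $E$-NR-ness is automatic from the $E$-CNR assumption on $\Lambda$. The precise constant $5J+6$ in the lemma reflects the cumulative cost of these enclosings, the loss coming from the boundary sums in the GRI, and the closing-off step near $\Bv$.
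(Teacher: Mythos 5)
Your proposal reconstructs the standard von Dreifus--Klein-style argument: iterate the Geometric Resolvent Inequality from the center toward the boundary, pick up an $\eul^{-m_kL_k}$ factor for each non-singular $L_k$-cube encountered, and ``jump over'' singular clusters by enclosing them in cubes of size $\geq L_{k+1}^{1/\alpha}=L_k$, whose resolvent is controlled by $\eul^{R_i^\beta}$ thanks to the $E$-CNR assumption on $\Lambda$; after $N\sim L_k^{1/2}$ steps the polynomial prefactors and the $O(JL_k)$ length lost to enclosures are absorbed into the $(1-(5J+6)/(2L_k)^{1/2})$ deterioration of the mass. This is exactly the content of \cite{DK89}*{Lemma 4.2}, adapted to the two-particle setting as in \cite{CS09}*{Lemma 4.5}, which is precisely what the paper cites in lieu of a proof; you have therefore filled in the argument the paper takes as known, not found a different route. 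One caution worth making explicit: because the ``$L_k$-distant'' notion here uses the symmetrized distance $d_{\BS}$, a cluster of singular $L_k$-cubes that are pairwise \emph{not} $L_k$-distant from a given one can lie near \emph{either} of the two $\BS$-images of it, so the enclosing cube $\BC_{R_i}(\Bu_i)$ must be chosen after replacing each such neighbor by whichever of its two symmetric images is geometrically close; with that convention the bound $R_i=O(L_k)$ and the count of at most $J$ clusters both go through, as hypothesis~(ii) bounds the number of pairwise $L_k$-distant singular subcubes regardless of symmetry class. This detail is where \cite{CS09}*{Lemma 4.5} departs from the single-particle version, and it should be spelled out rather than buried in ``the main technical obstacle.''
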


\begin{proof}
Simple reformulation of \cite{DK89}*{Lemma 4.2}. See also \cite{CS09}*{Lemma 4.5}.  
\end{proof}

\begin{theo}\label{thm:DS.k,for,diagonal-cubes}
There exists $0<L_2^*<+\infty$ such that if $L_0\geq L_2^*$, then if for $k\geq 0$, property $\dsk$ holds true for all pairs of $L_k$-distant interactive cubes $\BC_{L_k}(\Bx)$, $\BC_{L_k}(\By)$, the $\dskone$ holds true for all pairs of $L_{k+1}$-distant interactive cubes $\BC_{L_{k+1}}(\Bx)$, $\BC_{L_{k+1}}(\By)$ 
\end{theo}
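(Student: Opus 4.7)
The plan is to reproduce the classical Dreifus--Klein two-scale induction step, adapted to the two-particle setting and restricted to interactive cubes. Fix a pair of $L_{k+1}$-distant interactive cubes $\BC_{L_{k+1}}(\Bx),\BC_{L_{k+1}}(\By)$ and consider the bad event
\[
\rB_{k+1}=\left\{\exists\,E\in I:\BC_{L_{k+1}}(\Bx)\ \text{and}\ \BC_{L_{k+1}}(\By)\ \text{are both}\ (E,m_{k+1})\text{-S}\right\},
\]
whose probability must be shown to be at most $L_{k+1}^{-2p}$. Introduce the auxiliary events
\[
\rR=\left\{\exists\,E\in I:\text{neither cube is}\ E\text{-CNR}\right\},
\]
\[
\mathrm{M}_{\Bu}=\left\{\exists\,E\in I:\ M_{\nd}(\BC_{L_{k+1}}(\Bu),E)+M_{\dia}(\BC_{L_{k+1}}(\Bu),E)>J\right\},\quad\Bu\in\{\Bx,\By\},
\]
where $J$ is a fixed odd integer (for concreteness, $J=7$).

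First I would tune $\gamma$ large relative to the fixed $J$, and $L_0$ large, so that Lemma \ref{lem:(E,J)-CNR,implies,NS} delivers an $(E,m_{k+1})$-NS cube whenever it is $E$-CNR and carries at most $J$ singular $L_k$-subcubes. Contrapositively, on $\rB_{k+1}\setminus\rR$, for the $E$ witnessing $\rB_{k+1}$ at least one of the two cubes is $E$-CNR, and that cube must therefore carry more than $J$ singular $L_k$-subcubes. Hence
\[
\rB_{k+1}\ \subset\ \rR\cup\mathrm{M}_{\Bx}\cup\mathrm{M}_{\By}.
\]

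Then I would estimate the three contributions. Lemma \ref{lem:validity.W1.W2} applied with property $\wtwo$ at exponent $q>4p$ gives $\DP(\rR)\le L_{k+1}^{-q}\le L_{k+1}^{-4p}$. For $\mathrm{M}_{\Bu}$ I would split $\{M_{\nd}+M_{\dia}>J\}\subset\{M_{\nd}\ge 4\}\cup\{M_{\dia}\ge 4\}$. Lemma \ref{lem:property.D.cubes} with $n=2$ yields the bound $C(2,d)\,L_k^{12d-4p}=L_{k+1}^{(24d-8p)/3}$, which is $\le L_{k+1}^{-2p}$ exactly when $p>12d$. Lemma \ref{lem:property.ND.cubes}, together with the scale relation $L_k=L_{k+1}^{2/3}$, produces a bound whose comparison with $L_{k+1}^{-2p}$ translates into the two inequalities $q>3p+6d$ (absorbed by $q>4p$ and $p>12d$) and $\tilde p>\tfrac{9}{4}p+\tfrac{15}{2}d$, precisely the sharpened one-particle exponent assumed before Theorem \ref{thm:DS.k}. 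Summing the three contributions, and taking $L_0\ge L_2^{\ast}$ large enough, gives $\DP(\rB_{k+1})\le L_{k+1}^{-2p}$, which is $\dskone$ for interactive pairs.

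The main obstacle is the bookkeeping of scales: the integer $J$ (which drives the mass decrement in Lemma \ref{lem:(E,J)-CNR,implies,NS}), the multiplicity threshold $2n$ in Lemma \ref{lem:property.D.cubes}, and the exponents $p,q,\tilde p,d$ under the scaling $\alpha=3/2$ must all be mutually compatible so that every term of the union bound stays below the target $L_{k+1}^{-2p}$. Happily, the hypotheses $p>12d$, $q>4p$ and $\tilde p>\tfrac{9}{4}p+\tfrac{15}{2}d$ are exactly calibrated so that the choice $J=7$, $n=2$ works, while choosing $\gamma$ large and $L_0$ large absorbs the geometric mass decrement and all polynomial prefactors.
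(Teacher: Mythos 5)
Your proposal reproduces the paper's argument almost verbatim: the same decomposition of $\rB_{k+1}$ into a resonance event (controlled by $\wtwo$) and two multiplicity events (controlled by Lemma \ref{lem:property.ND.cubes} and Lemma \ref{lem:property.D.cubes}, the latter with $n=2$), with the inclusion $\rB_{k+1}\subset\rR\cup\mathrm{M}_{\Bx}\cup\mathrm{M}_{\By}$ following from Lemma \ref{lem:(E,J)-CNR,implies,NS} exactly as in the paper, and an identical check that $p>12d$, $q>4p$, $\tilde p>\tfrac{9}{4}p+\tfrac{15}{2}d$ make every term $\leq L_{k+1}^{-2p}$. The only cosmetic difference is your choice $J=7$ with the split $\{M_{\nd}\geq 4\}\cup\{M_{\dia}\geq 4\}$ where the paper takes $J=2n+1=5$ and splits as $\{M_{\nd}\geq 2\}\cup\{M_{\dia}\geq 2n\}$; both choices use the same two lemmas and close under the same exponent hypotheses.
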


\begin{proof}
Consider a pair of $L_{k+1}$-distant two-particle interactive cubes $\BC_{L_{k+1}}(\Bx)$ and $\BC_{L_{k+1}}(\By)$. Let us set
\begin{align*}
\rB_{k+1}
&=\left\{\exists\,E\in I: \BC_{L_{k+1}}(\Bx),\text{ and }\BC_{L_{k+1}}(\By)\text{ are $(E,m_{k+1})$-S}\right\}\\
\Sigma 
&=\left\{\exists\,E\in I:\text{neither }\BC_{L_{k+1}}(\Bx)\text{ nor }\BC_{L_{k+1}}(\By)\text{ is $E$-CNR}\right\}\\
\rS_{\Bx}
&=\left\{\exists\,E\in I:M_{\nd}(\BC_{L_{k+1}}(\Bx),E)+M_{\dia}(\BC_{L_{k+1}}(\Bx),E)\geq J+1\right\}\\
\rS_{\By}
&=\left\{\exists\,E\in I:M_{\nd}(\BC_{L_{k+1}}(\By),E)+M_{\dia}(\BC_{L_{k+1}}(\By),E)\geq J+1\right\}.
\end{align*}
Let $\omega \in \rB$. Suppose that $\omega \notin \Sigma \cup \rS_{\Bx}$ so that $\forall E\in I$ either $\BC_{L_{k+1}}(\Bx)$ or $\BC_{L_{k+1}}(\By)$ is $(E,J)$-CNR and $M_{\nd}(\BC_{L_{k+1}}(\Bx),E)+M_{\dia}(\BC_{L_{k+1}}(\Bx),E)\leq J$. Thus the  cube $\BC_{L_{k+1}}(\Bx)$ cannot be $(E,J)$-CNR: indeed, by Lemma \ref{lem:(E,J)-CNR,implies,NS}, it would be $(E,m_{k+1})$-NS. So, it is the cube $\BC_{L_{k+1}}(\By)$ which is $(E,J)$-CNR and $(E,m{+1})$-S. This implies again by Lemma \ref{lem:(E,J)-CNR,implies,NS} that
\[
M_{\nd}(\BC_{L_{k+1}}(\By);E)+M_{\dia}(\BC_{L_{k+1}}(\By);E)\geq J+1.
\] 
Therefore $\omega\in \rS_{\By}$. This shows that
\begin{equation*}
\rB_{k+1}\subset \Sigma \cup \rS_{\Bx}\cup\rS_{\By}.
\end{equation*}
Therefore,
\begin{align*}
\prob{\rB_{k+1}} 
& \leq \prob{\Sigma}+\prob{\rS_{\Bx}}+\prob{\rS_{\By} }\\
&\leq  L_{k+1}^{-q}+2\prob{\rS_{\Bx}}.
\end{align*}
with $q>4p$ large enough.

It remains to estimate $\prob{\rS_{\Bx}}$. Set $J=2n+1$, then  
\[
M_{\nd}(\BC_{L_{k+1}}(\Bx);E)+ M_{\dia}(\BC_{L_{k+1}}(\Bx);E)\geq 2n+2
\]
implies that either $M_{\nd}(\BC_{L_{k+1}}(\Bx);E)\geq 2$ or $M_{\dia}(\BC_{L_{k+1}}(\Bx);E)\geq 2n$. Then by Lemma \ref{lem:property.ND.cubes} and Lemma \ref{lem:property.D.cubes}, we have
\begin{align*}
\prob{\rS_{\Bx}} 
& \leq \prob{\exists\,E\in I:M_{\nd}(\BC_{L_{k+1}}(\Bx);E)\geq 2}\\
&\quad+\prob{ \exists\,E\in I:M_{\dia}(\BC_{L_{k+1}}(\Bx);E)\geq 2n}\\
&\leq \frac{(2L_{k+1}+1)^{4d}}{2}(L_k^{-q}+C(d)L_k^{4d-\frac{-2\tilde{p}}{\alpha}})+ C(n,d)L_k^{4dn\alpha-2np} \\
&\leq C(d) \left(L_{k+1}^{4d-\frac{4p}{\alpha}} + L_{k+1}^{4d+ \frac{4d}{\alpha}-\frac{2\tilde{p}}{\alpha^2}}\right) + C(n,d)L_{k+1}^{4nd-\frac{2np}{\alpha}}  \\
&\leq C(d)\left( L_{k+1}^{4d-\frac{8p}{3}} + L_{k+1}^{-\frac{8\tilde{p}}{9}+\frac{8d}{9}+4d} + L_{k+1}^{-\frac{4p}{\alpha}+ 8d}\right) \quad \text{(by taking $n=2$ and $\alpha=3/2$)}\\
&\leq L_{k+1}^{-2p}
\end{align*}
where we used that by our assumptions $q>4p$, $p>12d$ and $\tilde{p}>\frac{9}{4}p+\frac{15}{2}d$
\end{proof}

\section{Case (III). Mixed pairs of singular cubes}\label{sec:mixed.cubes}

Now we derive property $\dskone$ in the case (III), for mixed pairs of two-particle cubes (where one cube is interactive and the other non-interactive). Here we use several properties which have been establish earlier in this paper for all scales lengths. Namely $\wone$, $\wtwo$, $\ndrons$ and the inductive assumption.

\begin{theo}\label{thm:mixed-cubes}
There exists $0<L_3^*<+\infty$ such that if $L_0\geq L_3^*$ and if for  $k\geq 0$, property $\dsk$ holds true
\begin{enumerate}
\item[\rm(i)] for every pairs of $L_k$-distant non-interactive cubes $\BC_{L_k}(\Bx)$, $\BC_{L_k}(\By)$.\\
\item[\rm(ii)] for every pairs of $L_k$-distant interactive cubes $\BC_{L_k}(\Bx)$, $\BC_{L_k}(\By)$.
\end{enumerate}

Then $\dskone$ holds true for all mixed pairs of $L_{k+1}$-distant cubes $\BC_{L_{k+1}}(\Bx)$, $\BC_{L_{k+1}}(\By)$.
\end{theo}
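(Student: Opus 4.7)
The plan is to reuse the structure of Theorem~\ref{thm:DS.k,for,diagonal-cubes}, replacing, for the non-interactive partner, the singular-count event by a tunneling event in the style of Theorem~\ref{thm:DSk.ND-cubes}. Assume without loss of generality that $\BC_{L_{k+1}}(\Bx)$ is interactive and $\BC_{L_{k+1}}(\By)=C_{L_{k+1}}(y_1)\times C_{L_{k+1}}(y_2)$ is non-interactive. Fix $J=2n+1$ with $n=2$ and introduce the events
\begin{align*}
\rB_{k+1} &= \{\exists\,E\in I: \BC_{L_{k+1}}(\Bx)\text{ and }\BC_{L_{k+1}}(\By)\text{ are both }(E,m_{k+1})\text{-S}\},\\
\Sigma &= \{\exists\,E\in I: \text{neither }\BC_{L_{k+1}}(\Bx)\text{ nor }\BC_{L_{k+1}}(\By)\text{ is $E$-CNR}\},\\
\rS_{\Bx} &= \{\exists\,E\in I: M_{\nd}(\BC_{L_{k+1}}(\Bx),E)+M_{\dia}(\BC_{L_{k+1}}(\Bx),E)\geq J+1\},\\
\rT_{\By} &= \{\BC_{L_{k+1}}(\By)\text{ is }2m_k\text{-T}\}.
\end{align*}

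The first step will be the deterministic inclusion $\rB_{k+1}\subset\Sigma\cup\rS_{\Bx}\cup\rT_{\By}$. Given $\omega$ in the complement of the right-hand side and an energy $E\in I$ witnessing $\rB_{k+1}$, at least one of the two cubes must be $E$-CNR because $\omega\notin\Sigma$. If it were $\BC_{L_{k+1}}(\Bx)$, Lemma~\ref{lem:(E,J)-CNR,implies,NS} combined with $M_{\nd}+M_{\dia}\leq J$ would force $\BC_{L_{k+1}}(\Bx)$ to be $(E,m_{k+1})$-NS, contradicting $\omega\in\rB_{k+1}$. So $\BC_{L_{k+1}}(\By)$ must be $E$-CNR; since $\omega\notin\rT_{\By}$ it is also $2m_k$-NT, and Lemma~\ref{lem:NDRoNS} (applied at scale $L_{k+1}$) then forces it to be $(E,m)$-NS with $m\geq m_k\geq m_{k+1}$ once $L_0$ is large, again a contradiction.

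For the probability estimates, I would invoke $\wtwo$ (Lemma~\ref{lem:validity.W1.W2}) to get $\prob{\Sigma}\leq L_{k+1}^{-q}$ with $q>4p$, and then reproduce verbatim the bookkeeping of Theorem~\ref{thm:DS.k,for,diagonal-cubes} to bound $\prob{\rS_{\Bx}}$ via Lemmas~\ref{lem:property.ND.cubes} and~\ref{lem:property.D.cubes}; both lemmas are available here because the present hypotheses (i) and (ii) furnish $\dsk$ for both NI and I pairs at scale $L_k$. Since $\BC_{L_{k+1}}(\By)$ is a product cube, $\rT_{\By}$ reduces to the event that one of its single-particle factors contains two disjoint $L_k$-sub-cubes that are $(E,2m_k)$-S for some $E\in I$; a union bound over at most $(2L_{k+1}+1)^{2d}$ pairs of centres, together with Theorem~\ref{thm:DS.kun} at scale $L_k$, then yields $\prob{\rT_{\By}}\leq (2L_{k+1}+1)^{2d}L_k^{-2\tilde p}$.

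The final step will be to sum the three bounds and to check, using $\alpha=3/2$, $p>12d$, $q>4p$ and $\tilde p>\tfrac{9}{4}p+\tfrac{15}{2}d$, that each term is at most $\tfrac{1}{3}L_{k+1}^{-2p}$ for $L_0\geq L_3^*$ large enough. I expect no conceptual obstacle: the NDRoNS lemma already does all the heavy lifting for the NI partner, and the I partner is controlled by the same counting arguments as in case (II). The only care needed is the exponent bookkeeping, entirely parallel to the end of the proof of Theorem~\ref{thm:DS.k,for,diagonal-cubes}; the main subtlety is simply making sure that the $\rT_{\By}$ contribution at scale $L_k$ (bounded using $\tilde p$) fits under the target $L_{k+1}^{-2p}$, which is exactly what the assumption $\tilde p>\tfrac{9}{4}p+\tfrac{15}{2}d$ was designed to secure.
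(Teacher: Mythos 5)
Your proposal is essentially the paper's own proof, up to swapping which of the two cubes is labelled interactive versus non-interactive: you use the same decomposition into a resonance event $\Sigma$ bounded by $\wtwo$, a singular-count event for the interactive partner controlled by Lemmas~\ref{lem:property.ND.cubes} and~\ref{lem:property.D.cubes} via Lemma~\ref{lem:(E,J)-CNR,implies,NS}, and a tunneling event for the non-interactive partner controlled by Lemma~\ref{lem:NDRoNS} and Theorem~\ref{thm:DS.kun}. The only cosmetic difference is that you take the tunneling mass to be $2m_k$ where the paper uses $2m_{k+1}$; both feed into Lemma~\ref{lem:NDRoNS} to yield $(E,m_{k+1})$-NS, so the argument is unchanged.
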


\begin{proof}
Consider a pair of two particles $L_{k+1}$-distant cubes $\BC_{L_{k+1}}(\Bx)$ and $\BC_{L_{k+1}}(\By)$, where $\BC_{L_{k+1}}(\Bx)$ is NI while $\BC_{L_{k+1}}(\By)$ is I. Let us set as in the previous sections
\begin{align*}
\rB_{k+1} 
&=\left\{\exists\,E\in I: \BC_{L_{k+1}}(\Bx),\ \BC_{L_{k+1}}(\By),\text{ are $(E,m_{k+1})$-S}\right\}\\
\Sigma 
& = \left\{\exists\,E\in I:\text{ neither } \BC_{L_{k+1}}(\Bx)\text{ nor } \BC_{L_{k+1}}(\By)\text{ is $E$-CNR}\right\}
\\
\rT_{\Bx} 
& = \left\{\BC_{L_{k+1}}(\Bx)\text{ is $2m_{k+1}$-T}\right\}\\
\rS_{\By} 
& = \left\{\exists\,E\in I: M_{\nd}(\BC_{L_{k+1}}(\By);E)+M_{\dia}(\BC_{L_{k+1}}(\By);E)\geq J+1\right\}.
\end{align*}
Let $\omega\in \rB_{k+1}\setminus ( \Sigma\cup \mathrm{T}_{\Bx})$, then $\forall E \in I $ either $\BC_{L_{k+1}}(\Bx)$ or $\BC_{L_{k+1}}(\By)$ is $(E,J)$-CNR  and  $\BC_{L_{k+1}}(\Bx)$ is $2m_{k+1}$-NT. By Lemma \ref{lem:NDRoNS}, $\BC_{L_{k+1}}(\Bx)$ cannot be $(E,J)$-CNR. Indeed it would have been $(E,m_{k+1})$-NS. So it is the cube $\BC_{L_{k+1}}(\By)$ which is $(E,J)$-CNR. This implies for some $E\in I$, $M_{\nd}(\BC_{L_{k+1}}(\By);E)+M_{\dia}(\BC_{L_{k+1}}(\By);E)\geq J+1$ by Lemma \ref{lem:(E,J)-CNR,implies,NS}. Therefore $\omega\in \rS_{\By}$. This shows that
\[
\rB_{k+1}\subset \Sigma \cup \mathrm{T}_{\Bx}\cup \rS_{\By},
\]
so that
\begin{align*}
\prob{\rB_{k+1}} 
&\leq \prob{\Sigma}+\prob{\rT_{\Bx}}+\prob{\rS_{\By}}\\
&\leq L_{k+1}^{-q}+\frac{1}{2}(2L_{k+1}+1)^{4d} L_{k+1}^{-\frac{2\tilde{p}}{\alpha}}+ \frac{1}{4}L_{k+1}^{-2p}\\
&\leq L_{k+1}^{-4p} + C(d)L_{k+1}^{4d-\frac{2\tilde{p}}{\alpha}}+ \frac{1}{4}L_{k+1}^{-2p}\\
&\leq \frac{1}{4}L_{k+1}^{-2p}+\frac{1}{2}L_{k+1}^{-2p}+\frac{1}{4}L_{k+1}^{-2p}\\
&\leq L_{k+1}^{-2p}\\
\end{align*}
\end{proof}

Since $\prob{\Sigma}$, $\prob{\rT_{\Bx}}$ and $\prob{\rS_{\By}}$ have already been established in the previous sections.

This completes the proof of Theorem \ref{thm:inductive,MSA}. Therefore, Theorem \ref{thm:DS.k} is also proven, since we have already proven Theorem \ref{thm:initial.scale} giving the base of induction in $k$. By virtue of Theorem \ref{thm:DS.k.implies.loc}, this completes also the proof of our main result on two-particle localization at low energies, Theorem \ref{thm:main.result}.

\subsection*{Acknowledgments} 
The author would like to thank Anne Boutet de Monvel and Victor Chulaevsky for their constant support and encouragement.
\begin{bibdiv}
\begin{biblist}
\bib{AM93}{article}{
   author={Aizenman, M.},
   author={Molchanov, S.},
   title={Localisation at large disorder and at extreme energies an elementary
   derivation},
   journal={Commun. Math. Phys.},
   volume={157},
   date={1993},
   pages={245--278},
}
\bib{AW09}{article}{
   author={Aizenman, M.},
   author={Warzel, S.},
   title={Localization bounds for multiparticle systems},
   journal={Commun. Math. Phys.},
   volume={290},
   date={2009},
   pages={903--934},
}
\bib{CS08}{article}{
   author={Chulaevsky, V.},
   author={Suhov, Y.},
   title={Wegner bounds a two-particle Anderson tight binding model},
   journal={Comm. Math. Phys.},
   volume={283},
   date={2009},
   pages={479--489},
}
\bib{CS09}{article}{
   author={Chulaevsky, V.},
   author={Suhov, Y.},
   title={Eigenfunctions in a two-particle Anderson tight binding model},
   journal={Comm. Math. Phys.},
   volume={289},
   date={2009},
   pages={701--723},
}
\bib{DK89}{article}{
   author={von Dreifus, H.},
   author={Klein, A.},
   title={A new proof of localization in the Anderson tight binding model},
   journal={Commun. Math. Phys.},
   volume={124},
   date={1989},
   pages={285--299},
}
\bib{E11}{article}{
   author={Ekanga, T.},
   title={On two-particle Anderson localization at low energies},
   journal={C. R. Math. Acad. Sci. Paris},
   volume={349},
   date={2011},
   number={3-4},
   pages={167--170},
}
\bib{FMSS85}{article}{
   author={Fr\"{o}hlich, J.},
   author={Martinelli, F.},
   author={Scoppola, E.},
   author={Spencer, T.},
   title={Constructive proof of localization in the Anderson tight binding
   model},
   journal={Commun. Math. Phys.},
   volume={101},
   date={1985},
   pages={21--46},
}
\bib{Kir08}{article}{
   author={Kirsch, Werner},
   note={With an appendix by Fr\'ed\'eric Klopp},
   conference={
      title={Random Schr\"odinger operators},
   },
   book={
      series={Panor. Synth\`eses},
      volume={25},
      publisher={Soc. Math. France},
      place={Paris},
   },
   date={2008},
   pages={1--119},
}
\bib{St01}{book}{
   author={Stollmann, P.},
   title={Caught by disorder},
   series={Progress in Mathematical Physics},
   volume={20},
   note={Bound states in random media},
   publisher={Birkh\"auser Boston Inc.},
   place={Boston, MA},
   date={2001},
}
\bib{Weg81}{article}{
   author={Wegner, Franz},
   title={Bounds on the density of states in disordered systems},
   journal={Z. Phys. B},
   volume={44},
   date={1981},
   number={1-2},
   pages={9--15},
}
\end{biblist}
\end{bibdiv}
\end{document}